\newtheorem{theorem}{Theorem}[]
\numberwithin{equation}{section}
\renewcommand{\section}{
  \@startsection
  {section}
  {1}
  {0pt}
  {1.1\baselineskip}
  {0.2\baselineskip}
  {\sc \centering}
}
\renewcommand{\subsection}{
  \@startsection
  {subsection}
  {1}
  {0pt}
  {1.1\baselineskip}
  {0.2\baselineskip}
  {\sc \centering}
}
\renewcommand{\subsubsection}{
  \@startsection
  {subsubsection}
  {1}
  {0pt}
  {1.1\baselineskip}
  {0.2\baselineskip}
  {\sc \centering}
}
\begin{document}

\title{\large\sc A quantitative study on the role of TKI combined with Wnt/$\beta$-catenin signaling and IFN-$\alpha$ in the treatment of CML through deterministic and stochastic approaches}
\normalsize
\author{\sc{Sonjoy Pan} \thanks{Indian Institute of Technology Guwahati, Guwahati-781039, Assam, India, e-mail: sonjoy.pan@iitg.ac.in}
\and \sc{Soumyendu Raha} \thanks{Indian Institute of Science Bangalore, Bengaluru-560012, Karnataka, India, e-mail: raha@iisc.ac.in}
\and \sc{Siddhartha P. Chakrabarty} \thanks{Indian Institute of Technology Guwahati, Guwahati-781039, Assam, India, e-mail: pratim@iitg.ac.in,
Phone: +91-361-2582606, Fax: +91-361-2582649}}
\date{}
\maketitle
\begin{abstract}

We propose deterministic and stochastic models for studying the pharmacokinetics of chronic myeloid leukemia (CML), upon administration of IFN-$\alpha$ (the traditional treatment for CML), TKI (the current frontline medication for CML) and Wnt/$\beta$-catenin signaling (the state-of-the art therapeutic breakthrough for CML). To the best of our knowledge, no mathematical model incorporating all these three therapeutic protocols are available in literature. Further, this work introduces a stochastic approach in the study of CML dynamics. The key contributions of this work are: (1) Determination of the patient condition, contingent upon the patient specific model parameters, which leads to prediction of the appropriate patient specific therapeutic dosage. (2) Addressing the question of how the dual therapy of TKI and Wnt/$\beta$-catenin signaling or triple combination of all three, offers potentially improved therapeutic responses, particularly in terms of reduced side effects of TKI or IFN-$\alpha$. (3) Prediction of the likelihood of CML extinction/remission based on the level of CML stem cells at detection.

{\it Keywords: CML Stem Cells; Mature CML Cells; CTL Response; Stochastic Model}

\end{abstract}

\section{Introduction}
\label{CML_Introduction}

The complex physiological process of hematopoiesis involving the differentiation and regeneration of hematopoietic stem cells (HSC) arising in the bone marrow is extremely regulated and critical for maintaining the blood supply in the body and ensuring a balance between the differentiated cell lines of red blood cells (erythrocytes), white blood cells (leukocytes) and platelets \cite{Ainseba10,Clapp16,Colijn05}. Leukemia which is characterized through elevated blood cells (generally white blood cells) count is triggered by mutated blood cells violating the highly regulated process of hematopoiesis. It can be classified as either acute or chronic (contingent on the progression speed of the disease) and is either myeloid or lymphocytic (contingent on the maturity type of the cells) \cite{Clapp16,Radulescu16}. A vast majority of cases of chronic myeloid leukemia (CML), a myeloproliferative condition is attributed to a reciprocal translocation of one chromosome 9 and one chromosome 22, resulting in one chromosome 9 being longer and one chromosome 22 being shorter, than their respective normal counterparts \cite{Clapp16,Radulescu16,Sawyers99}. The occurrence of resulting chromosomal abnormality, called Philadelphia chromosome (denoted Ph) leads to the production of the BCR-ABL fusion oncogene. The BCR-ABL oncogene encodes for tyrosine kinase (a protein) that triggers the white blood cells to proliferate at an abnormal rate as compared to normal white blood cells \cite{Clapp16,Radulescu16,Besse17}. The progression of CML can typically be observed in three phases, namely, the chronic phase, the accelerated phase and finally the blast crisis, which is akin to acute leukemia \cite{Koch17}.

Traditionally, the standard therapy for CML includes interferon alfa-2a (IFN-$\alpha$) whose curative effectiveness was highly limited \cite{Mendrazitsky17,Guilhot07}. The enhanced life expectancy of CML patients, upon administration of IFN-$\alpha$ is believed to be achieved through activation of the effector immune cells \cite{Burchert05}, specifically the stimulation of the effector T-cells \cite{Montoya02}. At the turn of this century, a new line of CML treatment by way of tyrose kinase inhibitor (TKI), such as imatinib, dasatinib and nilotinib brought about a great leap in terms of therapeutic promise \cite{Clapp16}. Because the TKIs were very target specific, they targeted the CML cells by preventing the activation of proteins \cite{Clapp16}. TKIs limit the proliferation of the leukemic cells and causes their apoptosis, with minimal invasive effect on the healthy cells \cite{Besse17}. Imatinib's success in terms of remission and patient survival was reported to be 90\% for a 10-year period in patients receiving continuous treatment \cite{Deininger05}. The remission is determined on the basis of the ratio of BCR-ABL to a constant transcript, namely, BCR, ABL or GUS \cite{Clapp16}. Despite achieving remission and significantly increased life expectancy through TKIs, a truly complete cure remains elusive, due to non-responsiveness of quiescent leukemic stem cells (LSCs) to TKIs \cite{Gallipoli11,Rea12}. Remissions along with concurrent existence of low-levels of leukemic cell levels believed to be due to allogenic immune response to tumor, is pivotal for long term cure \cite{Besse18}. However, a combination therapy of imatinib and IFN-$\alpha$ provides improved response upon administration of IFN-$\alpha$ after remission induced by imatinib has taken place \cite{Mendrazitsky17}. The toxicity and the consequent side effects resulting from multi-drug combination can be addressed by optimal treatment scheduling \cite{He16}. In addition to IFN-$\alpha$, allogenic bone marrow transplant \cite{Clapp16,Kolb95} acted as the predominant treatment for CML, predating the introduction of TKIs. While the mechanism of both IFN-$\alpha$ and allogenic bone marrow transplant relies on activation of immune response, the practical usability of the latter is extremely limited. The effectiveness of IFN-$\alpha$ involves mechanism such as apoptosis and activation of immune cells \cite{Talpaz16}. As already observed, the TKIs despite achieving great therapeutic success, is ineffective against quiescent LSCs, while the IFN-$\alpha$ is effective against them. The complementary effectiveness of both renders them as a good choice for combination therapy \cite{Clapp16}. The TKIs drive the CML cells into remission, leaving it up to the IFN-$\alpha$ to act against the residual cells. Further, IFN-$\alpha$ is capable of driving the quiescent LSCs to a point where they can be targeted by the TKIs, which otherwise would not have been possible \cite{Essers09,Sato09}. The results from clinical trials after cessation of therapeutic treatment of imatinib also gives credence to the role of immune response in CML \cite{Clapp16}.

The drug resistance of LSCs to TKIs leads to the consideration of alternative therapeutic protocols. The cells of the bone marrow microenvironment (BMM) regulate and protect LSCs from the effect of TKIs, resulting in the limited efficacy of TKIs against LSCs, which is responsible for disease persistence or relapse after discontinuation of the therapy \cite{Zhang13B,Agarwal17}. The Wnt which are a family of secreted glycoproteins plays an important role in the regulation of cell development, differentiation, proliferation and death \cite{Miller01}. The Wnt signaling can be modified by a Porcupine acyl transferase inhibitor. Such an inhibitor, WNT974, in combination with TKI regulates the Wnt/$\beta$-catenin signaling and potentially inhibits the proliferation of CML stem and progenitor cells \cite{Zhang13B,Agarwal17,Luis12}. The combination of Wnt/$\beta$-catenin signaling and TKI inhibits the BMM protection of LSCs from TKI targeted therapy and could enhance the therapeutic response in CML patients.

We discuss below some of the relevant mathematical models for CML dynamics as a precursor to the presentation of our proposed model. The discussion will mostly be centered around the dynamics of CML and the consequent immune response. The modeling of leukemia through a diffusion model was proposed in \cite{Afenya98} and analyzed to study the spatial effect of leukemic cells. Accordingly, the populations of normal and leukemic cells were considered. The model is suggestive that there is eventual breakdown in the coexistence of normal and leukemic cells with eventual replacement of normal cells by the leukemic cells. The periodic CML (PCML) wherein oscillations are observed in the hematopoietic system is modeled in \cite{Colijn05} with the key conclusion being that the levels of proliferating stem cells is critical in triggering the process of PCML. The dynamics of hematopoiesis was modeled by incorporating two different types of cells, namely, hematopoietic stem cells and differentiated cells for the normal as well as cancerous cells \cite{Ainseba10,Radulescu14,Candea16}. The global analysis suggested \cite{Ainseba10} that coexistence of normal and cancerous cells is not possible in the long run, with eventual convergence either to the safe equilibrium or the blast equilibrium. A model incorporating delay equations and three types of cell division along with clinical implication is dealt with in \cite{Radulescu14} wherein the evolution of different cell line populations is sought to be captured through the CML model studied. Cell divisions occurring during cell evolution in CML is considered in \cite{Candea16} along with the competition between normal and leukemic cells. A model involving the CML cancer cells and the naive T-cells as well as the CML specific effector T-cells was analyzed in \cite{Moore04}. Moore and Li \cite{Moore04} considered two types of immune cells, namely, the naive T-cells and the effector T-cells. The naive T-cell are activated only if they are CML-specific, whereas the effector T-cells are (apriori) CML-specific and can act immediately. In case a CML-specific naive T-cell is activated, it binds to a peptide major histocompatibility complex (MHC) pair and the presence of costimulators will eventually result in the proliferation of the T-cells. As week-long proliferation window is followed by differentiation of effector cells capable of producing immune response to the CML antigen \cite{Moore04}. A sampling based analysis led to the model prediction in which the Gompertzian growth rate as well as the natural death rate of CML cells (without altering the other model parameters) are the key parameters for the control of CML progression. The CML model due to Moore and Li \cite{Moore04} is revisited in \cite{Krishchenko16} and a rigorous analysis of the global dynamics for the same is carried out. The authors established that the dynamics of the system is unstable around the tumor-free equilibrium and obtained the global stability conditions for the internal tumor equilibrium. A general model for CML to include normal, leukemic and resistant types of both stems cells and progenitor cells is considered in \cite{Helal15} and the global analysis for the disease free, healthy free and endemic equilibrium is discussed. Further, optimal control problems for several imatinib therapy scenarios and the suboptimal response in CML are studied. A mathematical model incorporating the combination treatment of imatinib to target the CML cells and IFN-$\alpha$ driven immunotherapy is discussed in \cite{Berezansky12}. The model involved the dynamics of the CML cells and the effector T-cells with a time varying delay of about a week for the therapeutic effect of IFN-$\alpha$ to kick in. The scheduling of the combination treatment of imatinib and IFN-$\alpha$ through an optimal control problem is presented in \cite{Mendrazitsky17}. The model used involves the CML cells and the greater effector T-cells, akin to \cite{Berezansky12}. The key finding of the work was the duration of administration of imatinib and the cessation of treatment to be possible only vis-a-vis IFN-$\alpha$ therapeutic treatment. In Besse et al. \cite{Besse18}, the authors reduced a model proposed in \cite{Clapp15} (for which no theoretical analysis was provided) to one which accounts for cycling stem cells, mature leukemic cells and the immune cells. This simplification was driven by determination of the key characteristics of the CML dynamics reported in \cite{Clapp15}. A bifurcation analysis based on the model system is suggestive of TKI treatment to be consistent with a treatment free low disease state being existent and stable.

The organization of the remainder of the paper is as follows. In Section \ref{CML_Deterministic_Model}, we present the deterministic model and its analysis. The formulation and analysis of the stochastic model are carried out in Section \ref{CML_Stochastic_Model}. This is followed by the numerical results and discussions in Section \ref{CML_Numerical_Results}.

\section{Deterministic Model}
\label{CML_Deterministic_Model}
We propose the following model, involving the CML stem cells, the mature CML cells and the immune response in presence of combination therapy of Wnt/$\beta$-catenin signaling, TKIs and IFN-$\alpha$:
\begin{eqnarray}
\label{CML_Mathematical_Model_Equations}
\frac{dC_{s}(t)}{dt} &=& rC_{s}(t)\left(1-\frac{C_{s}(t)}{K}\right)-d_{1}C_{s}(t)-\beta_{1}C_{s}(t), \nonumber \\
\frac{dC_{m}(t)}{dt} &=& \mu C_{s}(t)-d_{2}C_{m}(t)-\gamma C_{m}(t)T(t)-\beta_{2}C_{m}(t), \\
\frac{dT(t)}{dt} &=& \alpha C_{m}(t)T(t)-d_3T(t)+\beta_{3}T(t). \nonumber
\end{eqnarray}
Here $C_{s}(t)$ and $C_{m}(t)$ denote the concentrations of CML stem cells and mature CML cells, respectively, with $T(t)$ representing the number of effector T-cells cytotoxic to CML (CTL), all at time $t$. The first term in the equation for $C_{s}$ describes the logistic growth for the CML stem cells, with $r$ being the logistic growth rate and the constant $K$ being the carrying capacity, which represents the maximum possible concentration level of the CML stem cells that can be sustained. The natural death rate leading to the clearance of the CML stem cells is denoted by $d_{1}$. The inclusion of response to the targeted therapeutic protocol of Wnt/$\beta$-catenin signaling by a Porcupine acyl transferase inhibitor leading to the reduction in the level of CML stem cells is assumed to be taking place as a result of dosage $\beta_{1}$ of Wnt/$\beta$-catenin signaling. In the second equation which describes the dynamics of the mature CML cells, the production of the mature CML cells is assumed to happen as a result of differentiation of the CML stem cells at a rate $\mu$ accompanied by a natural cell death rate happening at a rate of $d_{2}$. The reduction of the mature CML cells is assumed to happen as a result of the influence of the CTL immune cells and also in response to the targeted treatment of TKIs (imatinib, nilotinib, or dasatinib). While the former is assumed to happen at a rate $\gamma$, the reduction of level of mature CML cells is assumed to take place as a result of TKI dosage of $\beta_{2}$. As CTL is an antigen-specific immune cell, we ignore its effect on CML stem cells. It is assumed that CTL does not have any effect on the CML stem cells, until they differentiate into mature ones. Finally, for the third equation, the production of the effector T-cells (CTLs) is driven by the immune response triggered by the presence of the mature CML cells, at a rate $\alpha$. The natural clearance of the CTLs is assumed to happen at a rate $d_{3}$. Further, the stimulation of CTL is also as a result of the therapeutic response to IFN-$\alpha$, whose administered dosage is assumed to be $\beta_{3}$. Note that, adding the decay term, $-\mu C_{s}$ (due to differentiation of the CML stem cells into mature ones), in the first equation of (\ref{CML_Mathematical_Model_Equations}) results in the equation becoming
\[\frac{dC_{s}(t)}{dt} = r'C_{s}(t)\left(1-\frac{C_{s}(t)}{K'}\right)-d_{1}C_{s}(t)-\beta_{1}C_{s}(t),\]
where $\displaystyle{r'=r-\mu}$ and $K'=K\left(1-\frac{\mu}{r}\right)$.
However, the form of the first equation of (\ref{CML_Mathematical_Model_Equations}) was motivated by the models in \cite{Besse18,Clapp15}.
From the biological perspective, all the model parameters have to be positive, and accordingly, the initial condition for the system (\ref{CML_Mathematical_Model_Equations}) is $\left(C_s(0),C_m(0),T(0)\right)\in \mathbb{R}_{+}^{3}$. We now define $d_{1}^{\prime}=d_{1}+\beta_{1}$, $d_{2}^{\prime}=d_{2}+\beta_{2}$ and $d_{3}^{\prime}=d_{3}-\beta_{3}$ and assume that $d_{3}^{\prime} > 0$. We now show the non-negativity and boundedness of the solution to the model system (\ref{CML_Mathematical_Model_Equations}).
\begin{theorem}
The solution to the system (\ref{CML_Mathematical_Model_Equations}) with initial condition in $\mathbb{R}_{+}^{3}$ exists in $\mathbb{R}_{+}^{3}$ and are ultimately bounded.
\end{theorem}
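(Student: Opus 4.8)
The plan is to first secure local existence and uniqueness, then establish positive invariance of $\mathbb{R}_+^3$, and finally obtain a priori bounds that simultaneously rule out finite-time blow-up (giving global existence) and yield ultimate boundedness. Since the right-hand side of (\ref{CML_Mathematical_Model_Equations}) is polynomial in $(C_s,C_m,T)$, it is locally Lipschitz, so Picard--Lindel\"of provides a unique solution on a maximal interval $[0,t_{\max})$. For non-negativity I would exploit the factored structure of the loss terms. The $C_s$-equation decouples and can be written as $\dot C_s = C_s\left[r(1-C_s/K)-d_1'\right]$, so that $C_s(t)=C_s(0)\exp\!\left(\int_0^t [r(1-C_s(s)/K)-d_1']\,ds\right)\ge 0$. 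Feeding $C_s\ge 0$ into the variation-of-constants form of the $C_m$-equation, $C_m(t)=C_m(0)e^{-\int_0^t(d_2'+\gamma T)\,ds}+\int_0^t \mu C_s(s)\,e^{-\int_s^t(d_2'+\gamma T)\,du}\,ds$, shows $C_m\ge 0$; and $T(t)=T(0)\exp\!\left(\int_0^t(\alpha C_m(s)-d_3')\,ds\right)\ge 0$ follows likewise. Hence the solution remains in $\mathbb{R}_+^3$ throughout $[0,t_{\max})$.

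For the bound on $C_s$ the logistic structure is decisive: from $\dot C_s \le rC_s(1-C_s/K)$ (or, more sharply, by comparison with the logistic equation of growth rate $r-d_1'$ and carrying capacity $K(1-d_1'/r)$), a standard differential-inequality and comparison argument gives $\limsup_{t\to\infty} C_s(t)\le M_s$ for an explicit constant $M_s$, together with $C_s(t)\le\max\{C_s(0),K\}$ for every $t$.

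The delicate step, and what I expect to be the main obstacle, is bounding $C_m$ and $T$: the $T$-equation has no intrinsic saturation, and since the ceiling $M_m:=\mu M_s/d_2'$ for $C_m$ (obtained by discarding the non-negative $-\gamma C_m T$ term) may exceed $d_3'/\alpha$, one cannot bound $T$ directly from $\dot T = T(\alpha C_m-d_3')$. The resolution is to pick the linear combination that annihilates the bilinear cross terms: setting $V=\alpha C_m+\gamma T$ gives $\dot V=\alpha\mu C_s-\alpha d_2' C_m-\gamma d_3' T$, where the $\pm\,\alpha\gamma C_m T$ contributions cancel exactly. Using $C_s\le M_s$ and writing $\delta=\min\{d_2',d_3'\}>0$ produces the linear differential inequality $\dot V\le \alpha\mu M_s-\delta V$, whence $\limsup_{t\to\infty} V(t)\le \alpha\mu M_s/\delta$; since $C_m,T\ge 0$ this bounds each of them. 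Finally, because the solution remains in a bounded set on every finite interval it cannot escape in finite time, so $t_{\max}=\infty$ and the solution exists globally and is ultimately bounded in $\mathbb{R}_+^3$.
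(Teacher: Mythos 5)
Your proposal is correct and follows essentially the same route as the paper: the logistic comparison gives $\limsup_{t\to\infty}C_s\le K$, and your combination $V=\alpha C_m+\gamma T$ is just a scalar multiple of the paper's auxiliary variable $X=C_m+\frac{\gamma}{\alpha}T$, with the identical cancellation of the $\pm\,\alpha\gamma C_mT$ cross terms and the identical decay constant $\min\{d_2',d_3'\}$. Your handling of positivity (explicit integral representations, exploiting that the $C_s$-equation decouples) and of global existence (ruling out finite-time blow-up) is somewhat more explicit than the paper's boundary-flux argument and its citation of functional differential equation theory, but the substance is the same.
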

\begin{proof}
From the theory of functional differential equations \cite{Hale13}, it follows that the system (\ref{CML_Mathematical_Model_Equations}) with the non-negative initial condition admits a unique solution $(C_s(t),C_m(t),T(t))$. Further, from the system (\ref{CML_Mathematical_Model_Equations}), we obtain,
\[\frac{dC_s}{dt} \bigg|_{C_s=0}=0,~\frac{dC_m}{dt} \bigg|_{C_m=0}=\mu C_s~\text{and }\frac{dT}{dt} \bigg|_{T=0}=0.\]
Therefore, it follows that the solution $\left(C_s,C_m,T\right)$ starting from $\mathbb{R}_{+}^{3}$ exists in $\mathbb{R}_{+}^{3}$.

Now, from the first equation of the model system (\ref{CML_Mathematical_Model_Equations}), we obtain $\displaystyle{\frac{dC_{s}(t)}{dt}\le rC_{s}\left(1-\frac{C_{s}}{K}\right)}$. Therefore, $\displaystyle{\limsup\limits_{t\rightarrow\infty}C_{s}\le K}$. We now introduce a new variable defined as $\displaystyle{X(t)=C_{m}+\frac{\gamma}{\alpha}T}$. From the last two equations of (\ref{CML_Mathematical_Model_Equations}), we obtain
\begin{eqnarray*}
\frac{dX(t)}{dt}&=&sC_{s}-d_2^{\prime}C_{m}-\frac{\gamma_2d_3^{\prime}}{\alpha}T\\
&\le&sK-d_{23}^{\prime}\left(C_{m}+\frac{\gamma}{\alpha}T\right),~\text{where }d_{23}^{\prime}=\min\{d_{2}^{\prime},d_{3}^{\prime}\}\\
&=&sK-d_{23}^{\prime}X.
\end{eqnarray*}
Therefore, $\displaystyle{\limsup\limits_{t\rightarrow\infty}X\le \frac{sK}{d_{23}^{\prime}}}$. Hence, $\displaystyle{\limsup\limits_{t\rightarrow\infty}C_{m}\le \frac{sK}{d_{23}^{\prime}}}$ and $\displaystyle{\limsup\limits_{t\rightarrow\infty}T\le \frac{\alpha sK}{\gamma d_{23}^{\prime}}}$. Thus the system (\ref{CML_Mathematical_Model_Equations}) with the non-negative initial condition has a unique, non-negative and ultimately bounded solution in the positively invariant set
\[\displaystyle{\mathcal{D}=\left\{(C_{s}(t),C_{m}(t),T(t))\in \mathbb{R}_{+}^{3}~:~
0\le C_s(t) \le K,~ 0\le C_m(t)\le  \frac{sK}{d_{23}^{\prime}},~ 0\le T(t) \le \ \frac{\alpha sK}{\gamma d_{23}^{\prime}}\right\}.}\]
\end{proof}
We now present the stability analysis for the steady states of the deterministic model \eqref{CML_Mathematical_Model_Equations}, which admits three steady states, namely,
\begin{enumerate}
\item $\displaystyle{\widetilde{E}=(\widetilde{C}_{s},\widetilde{C}_{m},\widetilde{T})=(0,0,0)}$.
\item $\displaystyle{\bar{E}=(\bar{C}_{s},\bar{C}_{m},\bar{T})=\left(\frac{K(r-d_{1}^{\prime})}{r},\frac{\mu K(r-d_{1}^{\prime})}{rd_{2}^{\prime}},0\right)}$.
\item $\displaystyle{E^{*}=(C_{s}^{*},C_{m}^{*},T^{*})=\left(\frac{K(r-d_{1}^{\prime})}{r},\frac{d_{3}^{\prime}}{\alpha},
\frac{\alpha\mu K(r-d_{1}^{\prime})-rd_{2}^{\prime}d_{3}^{\prime}}{r\gamma d_{3}^{\prime}}\right)}$.
\end{enumerate}
We now define a threshold parameter expressed as $\displaystyle{P^{*}=\frac{\alpha\mu K(r-d_{1}^{\prime})}{rd_{2}^{\prime}d_{3}^{\prime}}}$. The equilibrium point $\bar{E}$ exists if $r>d_{1}^{\prime}$ and $E^{*}$ exists if $r>d_{1}^{\prime}$ and $P^{*}>1$. The Jacobian for the model system is given by
\[J=
\begin{bmatrix}
r\left(1-\frac{2C_{s}}{K}\right)-d_{1}^{\prime}-rT & 0 & -rC_{s}\\
\mu & -d_{2}^{\prime}-\gamma T & -\gamma C_{m}\\
0 & \alpha T & \alpha C_{m}-d_{3}^{\prime}
\end{bmatrix}.\]
At $\widetilde{E}$, the eigenvalues of the Jacobian $J$ are $x=-d_{2}^{\prime}<0,~x=-d_{3}^{\prime}<0$ and $x=r-d_{1}^{\prime}$, which implies that $\widetilde{E}$ is stable when
$r<d_{1}^{\prime}$ and unstable when $r>d_{1}^{\prime}$. For $\bar{E}$, we have $r>d_{1}^{\prime}$ and the eigenvalues of the Jacobian $J$ are $x=-d_{2}^{\prime}<0,~x=-r+d_{1}^{\prime}<0$ and $x=\frac{\alpha\mu k(r-d_{1}^{\prime})-rd_{2}^{\prime}d_{3}^{\prime}}{rd_{2}^{\prime}}<0$ which holds only if $P^{*}<1$. The eigenvalues of the Jacobian $J$ at $E^{*}$ are obtained by solving the characteristic equation given by $x^3+a_2x^2+a_1x+a_0=0$, where,
\begin{eqnarray*}
a_0&=&\frac{r\alpha \gamma}{K}C_s^{*}C_m^{*}T^{*}+\alpha \mu \gamma C_s^{*}T^{*},\\
a_1&=&\frac{rd_2^{\prime}}{K}C_s^{*}+\frac{r\gamma}{K}C_s^{*}T^{*}+\alpha\gamma C_m^{*}T^{*},\\
a_2&=&d_2^{\prime}+\gamma T^{*}+\frac{r}{K}C_s^{*}.
\end{eqnarray*}
Since $C_s^{*}, C_m^{*}, T^{*}> 0$ if $P^{*}>1$, therefore $a_0, a_1, a_2 >0$ and $a_1a_2-a_0=\frac{rC_s^{*}}{K}\left(d_2^{\prime}+\gamma T^{*}\right)\left(d_2^{\prime}+\gamma T^{*}+\frac{rC_s^{*}}{K}\right)>0$ if $P^{*}>1$. Thus, by the Routh-Hurwitz criteria, all the three eigenvalues of the Jacobian $J$ at $E^{*}$ are either negative or have negative real parts if $P_{*}>1$.

In summary, we can state the following theorem:
\begin{theorem}
For the model system \eqref{CML_Deterministic_Model}, the following results hold:
\begin{enumerate}[(a)]
\item The disease free equilibrium $\widetilde{E}$ is stable when $r<d_{1}^{\prime}$ and unstable when $r>d_{1}^{\prime}$.
\item The disease but CTL free equilibrium $\bar{E}$ is locally asymptotically stable when $r>d_{1}^{\prime}$ with $P^{*}<1$ and unstable when $P^{*}>1$.
\item The disease and CTL coexistent equilibrium $E^{*}$ is locally asymptotically stable when $r>d_{1}^{\prime}$ with $P^{*}>1$.
\end{enumerate}
\end{theorem}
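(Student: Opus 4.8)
The plan is to prove all three parts by the standard linearization method: evaluate the Jacobian $J$ (displayed above) at each equilibrium and classify the equilibrium by the signs of the real parts of the eigenvalues, invoking the linearization theorem for hyperbolic equilibria. The block structure of $J$ makes parts (a) and (b) essentially algebraic, while part (c) requires the Routh--Hurwitz test on a genuine cubic.

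For part (a), I would substitute $\widetilde{E}=(0,0,0)$ into $J$. Since $\widetilde{T}=\widetilde{C}_m=\widetilde{C}_s=0$, the entries $\alpha T$, $-rC_s$ and $-\gamma C_m$ all vanish, leaving a lower-block-triangular matrix whose diagonal entries $r-d_{1}^{\prime}$, $-d_{2}^{\prime}$ and $-d_{3}^{\prime}$ are exactly the eigenvalues. As $-d_{2}^{\prime}<0$ and $-d_{3}^{\prime}<0$ always, the sign of $r-d_{1}^{\prime}$ alone decides stability, giving the stated dichotomy.

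For part (b), I would evaluate $J$ at $\bar{E}$, where $\bar{T}=0$. The entry $\alpha\bar{T}$ in the $(3,2)$ position vanishes and the $(3,1)$ entry is already zero, so the third row reduces to $(0,0,\alpha\bar{C}_m-d_{3}^{\prime})$; hence one eigenvalue is $\alpha\bar{C}_m-d_{3}^{\prime}$, and the other two come from the upper-left $2\times 2$ block. Using $\bar{C}_s=K(r-d_{1}^{\prime})/r$ one checks that the $(1,1)$ entry equals $-(r-d_{1}^{\prime})$, so this block is lower triangular with eigenvalues $-(r-d_{1}^{\prime})<0$ and $-d_{2}^{\prime}<0$. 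Finally, substituting $\bar{C}_m$ shows $\alpha\bar{C}_m-d_{3}^{\prime}=d_{3}^{\prime}(P^{*}-1)$, which is negative precisely when $P^{*}<1$ and positive when $P^{*}>1$, yielding local asymptotic stability in the former case and instability in the latter.

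For part (c), the equilibrium $E^{*}$ has $T^{*}>0$, so $J$ no longer triangularizes and the characteristic polynomial is the full cubic $x^3+a_2x^2+a_1x+a_0=0$ with the coefficients $a_0,a_1,a_2$ listed above. By the Routh--Hurwitz criterion for a cubic, all roots have negative real part if and only if $a_2>0$, $a_0>0$ and $a_1a_2-a_0>0$. When $P^{*}>1$ the three components $C_s^{*},C_m^{*},T^{*}$ are strictly positive, which immediately forces $a_0,a_1,a_2>0$. The main obstacle, and the only genuinely computational step, is verifying $a_1a_2-a_0>0$: I would expand the product $a_1a_2$, cancel the terms matching $a_0$, and use the equilibrium identity $\alpha C_m^{*}=d_{3}^{\prime}$ to collapse the remainder into the factored form $\frac{rC_s^{*}}{K}\bigl(d_{2}^{\prime}+\gamma T^{*}\bigr)\bigl(d_{2}^{\prime}+\gamma T^{*}+\frac{rC_s^{*}}{K}\bigr)$, which is manifestly positive. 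This establishes that $E^{*}$ is locally asymptotically stable whenever $r>d_{1}^{\prime}$ and $P^{*}>1$.
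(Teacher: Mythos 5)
Your proposal is correct and takes essentially the same route as the paper: eigenvalues read off the (block-)triangular Jacobian at $\widetilde{E}$ and $\bar{E}$, and the Routh--Hurwitz criterion applied to the characteristic cubic at $E^{*}$, arriving at the same factored form of $a_1a_2-a_0$. The only cosmetic differences are that you package the critical eigenvalue at $\bar{E}$ as $d_{3}^{\prime}(P^{*}-1)$ rather than the paper's explicit fraction, and that the cancellation yielding the factored form of $a_1a_2-a_0$ rests on the equilibrium relation $\mu C_{s}^{*}=C_{m}^{*}\left(d_{2}^{\prime}+\gamma T^{*}\right)$ in addition to $\alpha C_{m}^{*}=d_{3}^{\prime}$.
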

The assumption $r>d_{1}^{\prime}$ leads to the disease free equilibrium $\widetilde{E}$ being unstable. However, in case of $r<d_{1}^{\prime}$, the disease free equilibrium $\widetilde{E}$ would be stable which is suggestive of remission or even cure as a result of the combined rate of natural death and Wnt/$\beta$-catenin signaling exceeding the growth rate of the CML stem cells. Further, in case a patient is in state $E^{*}$ (which corresponds to a more severe case than $\bar{E}$), an appropriate choice of dosage of $\beta_{2}$ can drive the system from $P^{*}>1$ to $P^{*}<1$, resulting in stability at $\bar{E}$. We now define
\[P^{**}=\frac{\alpha\mu K(r-d_{1}^{\prime})-rd_{2}d_{3}^{\prime}}{rd_{3}^{\prime}}.\]
The condition for the most severe disease equilibrium $E^{*}$ is stable if $P^{*}>1$ \textit{i.e.,} $\beta_{2}<P^{**}$. If $\beta_{2}$ is increased such that $\beta_{2}>P^{**}$ which is equivalent to $P^{*}<1$, then the system converges to the less severe disease equilibrium $\bar{E}$. Finally, if $\beta_{1}$ is pushed beyond $r-d_{1}$, then the CML may be sent to remission or cure.

\section{Stochastic Model}
\label{CML_Stochastic_Model}

Although the deterministic model analyzed in Section \ref{CML_Deterministic_Model}, provides useful insight on the dynamics of CML progression, it does not offer any description of the probability of extinction of the disease. Accordingly, the random fluctuations in the physiological factors, the complex mechanism of the in-host immune system and the random process of cell growth/death motivates the introduction of stochasticity in the model. Therefore, in order to introduce randomness and uncertainly in the model description, we derive a stochastic differential equation (SDE) model, based on the state changes during interaction among the populations of CML stem cells, mature CML cells and CTLs themselves (considered in formulation of the deterministic model (\ref{CML_Mathematical_Model_Equations})), and we formulate the SDE model following the approach in \cite{Ahmad09,Dana11}.

Let $C_s$, $C_m$ and $T$ represent three random variables denoted in vector form as $X=(C_s,C_m,T)^{\top}$. The changes in these variables in a small time interval $\Delta t$ is given by $\Delta X=X(t+\Delta t)-X(t)$. Based on the mechanism of the deterministic model, the system (\ref{CML_Mathematical_Model_Equations}) can be written as
\[\dot{X}=S\Phi, \]
where
\[S=\begin{bmatrix}
r-d_1^{\prime} & 0 & 0 & -\frac{r}{K} & 0\\
\mu & -d_2^{\prime} & 0 & 0 & -\gamma \\
0 & 0 & -d_3^{\prime} & 0 & \alpha
\end{bmatrix}~\text{and}~
\Phi=\begin{bmatrix}
C_s \\ C_m \\ T \\ C_s^2 \\ C_mT
\end{bmatrix}
\]
represent the coefficient matrix and interaction strategy among the populations, respectively. Therefore the diffusion matrix is given by
\[B=S\cdot\text{diag}(\sqrt{\Phi})=
\begin{bmatrix}
(r-d_1^{\prime})\sqrt{C_s} & 0 & 0 & -\frac{r}{K}C_s & 0\\
\mu\sqrt{C_s} & -d_2^{\prime}\sqrt{C_m} & 0 & 0 & -\gamma\sqrt{C_mT} \\
0 & 0 & -d_3^{\prime}\sqrt{T} & 0 & \alpha\sqrt{C_mT}
\end{bmatrix}.\]
Hence
\[BB^{\top}=
\begin{bmatrix}
(r-d_1^{\prime})^2C_s+\frac{r^2}{K^2}C_s^2 & \mu(r-d_1^{\prime})C_s & 0\\
\mu(r-d_1^{\prime})C_s & \mu^2C_s+{d_2^{\prime}}^2C_m+\gamma^2C_mT & -\alpha\gamma C_mT \\
0 & -\alpha\gamma C_mT & {d_3^{\prime}}^2T+\alpha^2C_mT
\end{bmatrix}.\]
In order to reduce the number of Wiener processes, we find an alternative diffusion matrix $H$ having less number of columns such that $BB^{\top}=HH^{\top}$. By the Cholesky decomposition method, we compute
\[H=\begin{bmatrix}
h_{11} & 0 & 0\\
h_{21} & h_{22} & 0\\
0 & h_{32} & h_{33}
\end{bmatrix},\]
where
\begin{eqnarray*}
&&h_{11}=\sqrt{(r-d_1^{\prime})^2C_s+\frac{r^2}{K^2}C_s^2},~h_{21}=\frac{\mu(r-d_1^{\prime})C_s}{h_{11}},~h_{22}=\sqrt{\mu^2C_s+{d_2^{\prime}}^2C_m+\gamma^2C_mT-h_{21}^2},\\
&&h_{32}=\frac{-\alpha\gamma C_mT}{h_{22}},~h_{33}=\sqrt{{d_3^{\prime}}^2T+\alpha^2C_mT-h_{32}^2}.
\end{eqnarray*}
Hence the It\^{o} SDE model corresponding to the system (\ref{CML_Mathematical_Model_Equations}) is
\[dX(t)=f(X(t))dt+H(X(t))dW(t),\]
where $W(t)=(W_1(t),W_2(t),W_3(t))^{\top}$ is a vector of three independent Wiener processes. Thus the SDE model can be explicitly expressed as follows
\begin{eqnarray}
\label{CML_SDE_Model_Equations}
dC_{s}(t) &=& \left[rC_{s}(t)\left(1-\frac{C_{s}(t)}{K}\right)-d^{\prime}_{1}C_{s}(t)\right]dt+h_{11}dW_1(t), \nonumber \\
dC_{m}(t) &=& \left[\mu C_{s}(t)-d^{\prime}_{2}C_{m}(t)-\gamma C_{m}(t)T(t)\right]dt+h_{21}dW_1(t)+h_{22}dW_2(t), \\
dT(t) &=& \left[\alpha C_{m}(t)T(t)-d^{\prime}_3T(t)\right]dt+h_{32}dW_2(t)+h_{33}dW_3(t),\nonumber
\end{eqnarray}
where $d^{\prime}_1=d_1+\beta_1,~d^{\prime}_2=d_2+\beta_2$ and $d^{\prime}_3=d_3-\beta_3$.

We will show that the solution of the stochastic system (\ref{CML_SDE_Model_Equations}) is non-negative and bounded on $n$-th moment as well as stochastically bounded, in the sense of probability.
\begin{theorem}
\label{CML_Theorem_SDE_Sol_Exist}
For any initial condition $\displaystyle{X_0=(C_s(0),C_m(0),T(0)) \in \mathbb{R}^3_+}$, there exists a unique solution $\displaystyle{X(t)=}$ $\displaystyle{\left(C_s(t),C_m(t),T(t)\right)\in \mathbb{R}^3_+}$ to the stochastic system \eqref{CML_SDE_Model_Equations} for all $t\ge 0$ almost surely (a.s.) (\textit{i.e.}, with probability $1$), where $\displaystyle{\mathbb{R}^3_+=}$ $\displaystyle{\left\{(x_1,x_2,x_3) : x_i> 0,~i=1,2,3\right\}}$.
\end{theorem}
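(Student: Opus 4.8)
The plan is to run the classical localization argument for stochastic differential equations with locally Lipschitz coefficients. First I would note that on the open octant $\mathbb{R}^3_+$ the drift $f$ and the diffusion matrix $H$ are locally Lipschitz: on any compact subset of $\mathbb{R}^3_+$ the radicands of $h_{11},h_{22},h_{33}$ stay bounded away from $0$, so these square roots are smooth there, and $h_{21},h_{32}$ are smooth as well. Hence for every $X_0\in\mathbb{R}^3_+$ there is a unique maximal local solution on a random interval $[0,\tau_e)$, where $\tau_e$ is the explosion/boundary-hitting time. The whole problem then reduces to proving $\tau_e=\infty$ almost surely, which simultaneously forces the trajectory to stay in $\mathbb{R}^3_+$.

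Next I would introduce the usual sequence of stopping times. Choosing $k_0$ so large that $X_0\in[1/k_0,k_0]^3$, I set, for each integer $k\ge k_0$,
\[\tau_k=\inf\left\{t\in[0,\tau_e):\ \min_i X_i(t)\le \tfrac1k\ \text{ or }\ \max_i X_i(t)\ge k\right\},\]
where $X_1=C_s$, $X_2=C_m$, $X_3=T$. The $\tau_k$ increase to some $\tau_\infty\le\tau_e$, so it suffices to show $\tau_\infty=\infty$ a.s. As a Lyapunov function I would take
\[V(C_s,C_m,T)=(C_s-1-\ln C_s)+(C_m-1-\ln C_m)+(T-1-\ln T),\]
which is $C^2$ and non-negative on $\mathbb{R}^3_+$ (since $u-1-\ln u\ge 0$) and blows up to $+\infty$ as any coordinate tends to $0^+$ or to $+\infty$; this dual blow-up is exactly what simultaneously detects boundary-hitting and explosion.

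I would then apply It\^o's formula to $V(X(t))$ and estimate its generator $\mathcal{L}$. Because $V$ is separable, only the diagonal entries of $BB^\top=HH^\top$ enter, so
\[\mathcal{L}V=\sum_i f_i\Big(1-\tfrac{1}{X_i}\Big)+\tfrac12\Big[\tfrac{(BB^\top)_{11}}{C_s^2}+\tfrac{(BB^\top)_{22}}{C_m^2}+\tfrac{(BB^\top)_{33}}{T^2}\Big],\]
using the $BB^\top$ already computed above. The logistic term $-rC_s^2/K$ and the linear decay terms $-d_2'C_m$, $-d_3'T$ bound the drift contribution from above for large values. The target is the estimate $\mathcal{L}V\le c(1+V)$ for a constant $c>0$; then taking expectations up to $T\wedge\tau_k$ and invoking Gronwall's inequality gives $\mathbb{E}\,V(X(T\wedge\tau_k))\le (V(X_0)+cT)e^{cT}$, a bound independent of $k$. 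Since on $\{\tau_k\le T\}$ the stopped state has a coordinate equal to $1/k$ or $k$, we have $V(X(\tau_k))\ge a_k$ with $a_k=\min\{k-1-\ln k,\ \tfrac1k-1+\ln k\}\to\infty$, so that $a_k\,\mathbb{P}(\tau_k\le T)\le\mathbb{E}\,V(X(T\wedge\tau_k))$ forces $\mathbb{P}(\tau_k\le T)\to 0$; hence $\mathbb{P}(\tau_\infty\le T)=0$ for every $T$, i.e. $\tau_\infty=\infty$ a.s.

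The one genuinely delicate step is the generator bound $\mathcal{L}V\le c(1+V)$. Unlike the purely multiplicative-noise case, the demographic (square-root) structure of $H$ produces diagonal contributions such as $\tfrac12(BB^\top)_{11}/C_s^2=\tfrac{(r-d_1')^2}{2C_s}+\tfrac{r^2}{2K^2}$ that are singular as $C_s\to 0^+$, whereas the matching drift term $f_1(1-1/C_s)$ stays bounded near the boundary because $f_1$ is only $O(C_s)$ there. Controlling these boundary singularities near each coordinate hyperplane — rather than the large-value regime, which the logistic and decay terms dispatch easily — is the crux of the argument, and may require replacing the logarithmic weights by tuned weights $x_i-p_i-p_i\ln(x_i/p_i)$, or otherwise augmenting $V$ so that the drift's dissipation dominates the noise-induced singularity; this is where I would expect to spend the real effort.
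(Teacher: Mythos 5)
Your plan is the same as the paper's own proof (local Lipschitz coefficients on the open octant, stopping times, a logarithmic Lyapunov function, a generator bound $\mathcal{L}V\le c(1+V)$, Gronwall, contradiction at the boundary), and the step you flag as the crux is precisely where the argument breaks --- so your diagnosis is correct, but the gap is not just ``real effort,'' it is fatal. With the square-root (demographic) noise of this model, the It\^o correction contributes, for the paper's weighted version of $V$,
\[
\frac{1}{2}\left[\frac{h_{11}^2}{C_s^2}+\frac{d_1'(h_{21}^2+h_{22}^2)}{\mu C_m^2}+\frac{\gamma d_1'(h_{32}^2+h_{33}^2)}{\alpha\mu T^2}\right]
=\frac{(r-d_1')^2}{2C_s}+\frac{r^2}{2K^2}+\frac{d_1'\mu C_s}{2C_m^2}+\frac{d_1'(d_2'^{\,2}+\gamma^2T)}{2\mu C_m}+\frac{\gamma d_1'(d_3'^{\,2}+\alpha^2C_m)}{2\alpha\mu T},
\]
and these positive terms blow up like inverse powers of $C_s$, $C_m$, $T$ at the boundary, whereas $c(1+V)$ grows only logarithmically there; hence $\mathcal{L}V\le c(1+V)$ is false near every coordinate face, for logarithmic weights, tuned weights $x-p-p\ln(x/p)$, or any smooth $V$ with at most logarithmic boundary blow-up. (The paper's own proof writes down exactly this bracket and then silently omits it in the next displayed equality, keeping only the first-order terms --- that omission is the only reason its bound $LV\le c_1+c_2V$ appears to go through, so you have in effect located the error in the published argument.) A secondary point: your unweighted $V$ also fails already at the drift level, since the cross terms give $(\alpha-\gamma)C_mT$, which is quadratic and positive for the paper's parameters; the paper's weights $d_1'/\mu$ and $\gamma d_1'/(\alpha\mu)$ are chosen exactly to cancel this, and that part is repairable --- the It\^o part is not.

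The reason no choice of Lyapunov function can work is that the statement itself is false for this SDE. The $C_s$-equation decouples into a scalar diffusion with drift $b(x)=(r-d_1')x-\tfrac{r}{K}x^2$ and $\sigma^2(x)=(r-d_1')^2x+\tfrac{r^2}{K^2}x^2$; near $0$ (with $r\ne d_1'$, e.g.\ the disease-relevant regime $r>d_1'$) this is a Feller branching diffusion with no immigration term. Feller's boundary test applies: $2b/\sigma^2=2\bigl(a-\tfrac{r}{K}x\bigr)/\bigl(a^2+\tfrac{r^2}{K^2}x\bigr)$ with $a=r-d_1'$ is bounded near $0$, so the scale density is bounded above and below, $S(y)-S(0)\asymp y$, while the speed density satisfies $m(y)\asymp 1/y$; therefore $\int_0^{\epsilon}\bigl[S(y)-S(0)\bigr]m(y)\,dy<\infty$ and the origin is attainable in finite time with positive probability. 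The face $C_m=0$ is even worse: there the noise intensity $h_{21}^2+h_{22}^2\to\mu^2C_s>0$ does not degenerate, so $C_m$ crosses zero like a Brownian motion with drift. This is not a pathology but the very feature the model is built on --- the paper's own branching-process section computes a positive extinction probability $q=(d_1'/r)^{C_s(0)}$, i.e.\ hitting the boundary is a genuine event under demographic noise. A correct theorem must either assert existence and uniqueness only up to the (possibly finite) boundary-hitting time, with the solution absorbed at the boundary thereafter, or replace the noise by one degenerating fast enough at the faces (e.g.\ multiplicative noise proportional to $x$) for positivity to hold.
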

\begin{proof}
It is easily seen that the coefficients of the system \eqref{CML_SDE_Model_Equations} satisfy the local Lipschitz condition \cite{Arnold74,Mao07}. Therefore, for any initial condition $X_0 \in \mathbb{R}^3_+$, there exists a unique local solution $X(t)\in \mathbb{R}^3_+$ to the system \eqref{CML_SDE_Model_Equations} for $\displaystyle{t\in [0,t_e),~t_e>0}$. In order to prove the solution is global, \textit{i.e.}, $\displaystyle{X(t)\in\mathbb{R}^3_+}$ a.s. for all $t\ge 0$, we have to show that $t_e=\infty$ a.s. We choose a sufficiently large number $p_0\ge 0$ such that each of $\displaystyle{C_s(0),C_m(0)~\text{and}~T(0)}$ lies in $\displaystyle{\left[\frac{1}{p_0},p_0\right]}$. We now define
\[t_p=\inf\left\{t\in [0,t_e) : \text{at least one element of}~ \{C_s(t),C_m(t),T(t)\}\notin \left(\frac{1}{p},p\right)\right\}~\text{for}~ p\ge p_0\]
and $\displaystyle{\inf\phi=\infty}$, where $\phi$ is an empty set. We observe that $t_p$ is increasing as $p$ increases and denote $\displaystyle{t_{\infty}=\lim_{p\to\infty}t_p}$. This implies $\displaystyle{t_{\infty}\le t_e}$ a.s. Therefore, in order to show that $t_e=\infty$ a.s., we prove that $t_{\infty}=\infty$ a.s. If possible, let $t_{\infty}<\infty$ a.s. Then there exists two constants $t_T>0$ and $\epsilon\in (0,1)$ such that $P\{t_{\infty}\le t_T\}>\epsilon$. Hence there exists an integer $p_1\ge p_0$ such that
\begin{equation}
\label{SDE_Th_Sol_1}
P\{t_p\le t_T\}\ge\epsilon,~ \forall ~p\ge p_1.
\end{equation}
We define a $\mathbb{C}^2$-function $V:\mathbb{R}^3_+ \to \mathbb{R}_+$ by
\[V(X(t))=\Big(C_s(t)+1-\ln C_s(t)\Big)+\frac{{d}^{\prime}_1}{\mu}\Big(C_m(t)+1-\ln C_m(t)\Big)+\frac{\gamma{d}^{\prime}_1}{\alpha\mu}\Big(T(t)+1-\ln T(t)\Big).\]
The function $V$ is positive definite, since $(x+1-\ln x)> 0,~ \forall ~x>0$.
Using It\^{o} formula, we calculate the differential of $V$ along the solution trajectories of the stochastic system \eqref{CML_SDE_Model_Equations} as follows
\begin{eqnarray*}
dV(X(t))&=&LV(X(t))dt+\left(1-\frac{1}{C_s(t)}\right)h_{11}(t)dW_1(t)\\
&&+~\frac{{d}^{\prime}_1}{\mu}\left(1-\frac{1}{C_m(t)}\right)\Big(h_{21}(t)dW_1(t)+h_{22}(t)dW_2(t)\Big)\\
&&+~\frac{\gamma{d}^{\prime}_1}{\alpha\mu}\left(1-\frac{1}{T(t)}\right)\Big(h_{32}(t)dW_2(t)+h_{33}(t)dW_3(t)\Big),
\end{eqnarray*}
where
\begin{eqnarray*}
LV(X(t))&=&\left(1-\frac{1}{C_s(t)}\right)\left(rC_{s}(t)-\frac{rC^2_{s}(t)}{K}-d^{\prime}_{1}C_{s}(t)\right)\\
&&+~\frac{{d}^{\prime}_1}{\mu}\left(1-\frac{1}{C_m(t)}\right)\Big(\mu C_{s}(t)-d^{\prime}_{2}C_{m}(t)-\gamma C_{m}(t)T(t)\Big)\\
&&+~\frac{\gamma{d}^{\prime}_1}{\alpha\mu}\left(1-\frac{1}{T(t)}\right)\Big(\alpha C_{m}(t)T(t)-d^{\prime}_3T(t)\Big)\\
&&+~\frac{1}{2}\left[\frac{h^2_{11}(t)}{C^2_s(t)}+\frac{d^{\prime}(h^2_{21}(t)+h^2_{22}(t))}{\mu C^2_m(t)}+\frac{\gamma d^{\prime}_1(h^2_{32}(t)+h^2_{33}(t))}{\alpha\mu T^2(t)}\right]\\
&=&\left(r+\frac{r}{K}\right)C_s(t)-\frac{r}{K}C^2_s(t)-\frac{d^{\prime}_1}{\mu}(d^{\prime}_2+\gamma)C_m(t)-d^{\prime}_1\frac{C_s(t)}{C_m(t)}\\
&&+~\frac{\gamma d^{\prime}_1}{\mu}T(t)-\frac{\gamma d^{\prime}_1d^{\prime}_3}{\alpha\mu}T(t)+d^{\prime}_1-r+\frac{d^{\prime}_1d^{\prime}_2}{\mu}+\frac{\gamma d^{\prime}_1d^{\prime}_3}{\alpha\mu}\\
&\le&d^{\prime}_1-r+\frac{d^{\prime}_1d^{\prime}_2}{\mu}+\frac{\gamma d^{\prime}_1d^{\prime}_3}{\alpha\mu}+\left(r+\frac{r}{K}\right)C_s(t)+\frac{\gamma d^{\prime}_1}{\mu}T(t)
\end{eqnarray*}
Let \[c_1=d^{\prime}_1-r+\frac{d^{\prime}_1d^{\prime}_2}{\mu}+\frac{\gamma d^{\prime}_1d^{\prime}_3}{\alpha\mu}~~~\text{and}~~~c_2=2\left(r+\frac{r}{K}\right)+\frac{2\gamma d^{\prime}_1}{\mu}.\]
Then using the relation $x\le 2(x+1-\ln x),~\forall ~x> 0$, we obtain
\[\left(r+\frac{r}{K}\right)C_s(t)+\frac{\gamma d^{\prime}_1}{\mu}T(t)\le c_2V(X(t)).\]
Hence
\begin{eqnarray*}
dV(X(t))&\le& \Big(c_1+c_2V(X(t))\Big)dt+\left(1-\frac{1}{C_s(t)}\right)h_{11}(t)dW_1(t)\\
&&+~\frac{{d}^{\prime}_1}{\mu}\left(1-\frac{1}{C_m(t)}\right)\Big(h_{21}(t)dW_1(t)+h_{22}(t)dW_2(t)\Big)\\
&&+~\frac{\gamma{d}^{\prime}_1}{\alpha\mu}\left(1-\frac{1}{T(t)}\right)\Big(h_{32}(t)dW_2(t)+h_{33}(t)dW_3(t)\Big)\\
&\le& c_3\Big(1+V(X(t))\Big)dt+\left(1-\frac{1}{C_s(t)}\right)h_{11}(t)dW_1(t)\\
&&+~\frac{{d}^{\prime}_1}{\mu}\left(1-\frac{1}{C_m(t)}\right)\Big(h_{21}(t)dW_1(t)+h_{22}(t)dW_2(t)\Big)\\
&&+~\frac{\gamma{d}^{\prime}_1}{\alpha\mu}\left(1-\frac{1}{T(t)}\right)\Big(h_{32}(t)dW_2(t)+h_{33}(t)dW_3(t)\Big),
\end{eqnarray*}
where $c_3=\max\{c_1,c_2\}$. By taking $t_1\le t_T$, we obtain
\begin{eqnarray*}
\int_{0}^{t_p\land t_1}dV(X(t)) &\le& c_3\int_{0}^{t_p\land t_1}\Big(1+V(X(t))\Big)dt+\int_{0}^{t_p\land t_1}\left(1-\frac{1}{C_s(t)}\right)h_{11}(t)dW_1(t)\\
&&+~\frac{{d}^{\prime}_1}{\mu}\int_{0}^{t_p\land t_1}\left(1-\frac{1}{C_m(t)}\right)\Big(h_{21}(t)dW_1(t)+h_{22}(t)dW_2(t)\Big)\\
&&+~\frac{\gamma{d}^{\prime}_1}{\alpha\mu}\int_{0}^{t_p\land t_1}\left(1-\frac{1}{T(t)}\right)\Big(h_{32}(t)dW_2(t)+h_{33}(t)dW_3(t)\Big),
\end{eqnarray*}
where $t_p\land t_1=\min\{t_p, t_1\}$. This results in
\begin{eqnarray*}	
V(X(t_p\land t_1)) &\le& V(X_0)+c_3\int_{0}^{t_p\land t_1}\Big(1+V(X(t))\Big)dt+\int_{0}^{t_p\land t_1}\left(1-\frac{1}{C_s(t)}\right)h_{11}(t)dW_1(t)\\
&&+~\frac{{d}^{\prime}_1}{\mu}\int_{0}^{t_p\land t_1}\left(1-\frac{1}{C_m(t)}\right)\Big(h_{21}(t)dW_1(t)+h_{22}(t)dW_2(t)\Big)\\
&&+~\frac{\gamma{d}^{\prime}_1}{\alpha\mu}\int_{0}^{t_p\land t_1}\left(1-\frac{1}{T(t)}\right)\Big(h_{32}(t)dW_2(t)+h_{33}(t)dW_3(t)\Big).
\end{eqnarray*}
Taking expectation on both sides and using Fubini's theorem as well as properties of It\^o integral, we obtain
\begin{eqnarray*}	
EV(X(t_p\land t_1)) &\le& V(X_0)+c_3E\int_{0}^{t_p\land t_1}\Big(1+V(X(t))\Big)dt\\
&\le& V(X_0)+c_3t_T+c_3\int_{0}^{t_p\land t_1}EV(X(t))dt \\
\end{eqnarray*}
Using the Gronwall's inequality, we obtain
\begin{equation}
\label{SDE_Th_Sol_2}	
EV(X(t_p\land t_T)) \le c_4,
\end{equation}
where $c_4=\Big(V(X_0)+c_3t_T\Big)e^{c_3t_T}$.
Let $\Omega_p=\{t_p\le t_T\}$ for $p\ge p_1$. Therefore, using \eqref{SDE_Th_Sol_1}, it follows that $P(\Omega_p)\ge \epsilon$. Note that for every $\omega\in\Omega_p$, there is at least one of $C_s(t_p,\omega),~C_m(t_p,\omega)$ and $T(t_p,\omega)$ which equal(s) either $p$ or $\displaystyle{\frac{1}{p}}$ and hence \[V(X(t_p,\omega))\ge c_5,\] where \[c_5=\min\Big\{k(p+1-\ln p),~k\left(\frac{1}{p}+1+\ln p\right)\Big\},~~\text{with}~~k=\min\left\{1,\frac{d^{\prime}_1}{\mu},\frac{\gamma d^{\prime}_1}{\alpha\mu}\right\}.\]
Therefore \eqref{SDE_Th_Sol_1} and \eqref{SDE_Th_Sol_2} give
\[c_4\ge E[1_{\Omega_p}(\omega)V(X(t_p,\omega))]\ge \epsilon c_5,\]
where $1_{\Omega_p}$ is the indicator function of $\Omega_p$ and defined by
\[1_{\Omega_p}(\omega)=\begin{cases}
1 &\quad\text{if } \omega\in\Omega_p,\\
0 &\quad\text{if } \omega\notin\Omega_p.\\
\end{cases}\]
Taking $p\to \infty$ leads to the contradiction $\infty >c_4=\infty$. Hence we can conclude that $t_{\infty}=\infty$ a.s. This completes the proof of Theorem \ref{CML_Theorem_SDE_Sol_Exist}.
\end{proof}

\begin{theorem}
\label{CML_Theorem_SDE_Moment_Bounded}
Let $\displaystyle{X(t)=\left(C_s(t),C_m(t),T(t)\right)\in \mathbb{R}^3_+}$ be a solution of the stochastic system \eqref{CML_SDE_Model_Equations} with an initial condition $\displaystyle{X_0=(C_s(0),C_m(0),T(0)) \in \mathbb{R}^3_+}$. Then
\[\limsup_{t\to\infty} E[C^n_s(t)]\le M_1(n),~~\limsup_{t\to\infty} E[C^n_m(t)]\le M_2(n)~~\text{and}~~\limsup_{t\to\infty} E[T^n(t)]\le M_3(n),\]
where $M_i(n),~i=1,2,3$ are finite positive quantities and $n\ge 1$.
\end{theorem}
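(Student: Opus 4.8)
The plan is to treat the three moments in two stages, exploiting the same algebraic cancellation that drove the deterministic boundedness argument. Throughout I rely on Theorem \ref{CML_Theorem_SDE_Sol_Exist} to guarantee a unique global solution remaining in $\mathbb{R}^3_+$, so that all powers below are of non-negative quantities, and I use the standard fact that for a suitable $\mathbb{C}^2$ test function $g$ one has $\frac{d}{dt}E[g(X(t))]=E[Lg(X(t))]$ (made rigorous by the same localization/stopping-time device as in the previous proof, since the It\^o correction terms have mean zero).

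First I would bound the moments of $C_s$ in isolation. Applying the It\^o formula to $C_s^n$ and using $h_{11}^2=(r-d_1')^2C_s+\frac{r^2}{K^2}C_s^2$, the generator is
\[
L(C_s^n)=-\frac{nr}{K}C_s^{n+1}+\Big(n(r-d_1')+\tfrac12 n(n-1)\tfrac{r^2}{K^2}\Big)C_s^n+\tfrac12 n(n-1)(r-d_1')^2C_s^{n-1}.
\]
The leading term has degree $n+1$ with a negative coefficient, so it dominates all lower-order terms; hence for any $c_2>0$ there is a constant $c_1$ with $L(C_s^n)\le c_1-c_2C_s^n$. Taking expectations gives $\frac{d}{dt}E[C_s^n]\le c_1-c_2E[C_s^n]$, and a comparison argument yields $\limsup_{t\to\infty}E[C_s^n]\le c_1/c_2=:M_1(n)$. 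Because $C_s^q\le 1+C_s^n$ for $0\le q\le n$, the same bound controls $E[C_s^q]$ for every intermediate power $q$, which will be needed below.

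The decisive step is the joint treatment of $C_m$ and $T$, and this is where the main obstacle lies: a direct computation of $L(T^n)$ produces the term $n\alpha C_mT^n$, which is positive and cannot be absorbed without simultaneously controlling $C_m$. I would therefore reintroduce the combined variable $Y=C_m+\frac{\gamma}{\alpha}T$ used in the deterministic bound. Its drift is $\mu C_s-d_2'C_m-\frac{\gamma d_3'}{\alpha}T\le \mu C_s-d_{23}'Y$ with $d_{23}'=\min\{d_2',d_3'\}$, exactly as before. Crucially, the same cancellation persists at the level of the diffusion: computing the quadratic variation rate $v^\top BB^\top v$ with $v=(0,1,\frac{\gamma}{\alpha})^\top$ shows that the off-diagonal entry $(BB^\top)_{23}=-\alpha\gamma C_mT$ cancels the two $C_mT$ contributions from $(BB^\top)_{22}$ and $(BB^\top)_{33}$, leaving
\[
v^\top BB^\top v=\mu^2 C_s+{d_2'}^2 C_m+\frac{\gamma^2{d_3'}^2}{\alpha^2}T\le \mu^2 C_s+C'Y,
\]
which is linear in the states. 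This twofold cancellation of the $C_mT$ cross terms is what makes the estimate close.

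With these two facts, applying It\^o to $Y^n$ gives
\[
L(Y^n)\le -nd_{23}'Y^n+n\mu C_sY^{n-1}+\tfrac12 n(n-1)\big(\mu^2 C_sY^{n-2}+C'Y^{n-1}\big).
\]
Every term other than the leading $-nd_{23}'Y^n$ carries a power of $Y$ strictly less than $n$, so by Young's inequality each splits as $\epsilon Y^n$ plus a constant multiple of a moment of $C_s$ (or a pure constant). Choosing $\epsilon$ small enough that the net coefficient of $Y^n$ stays negative, taking expectations, and inserting the already-established bounds on $E[C_s^n]$ and its intermediate powers, a comparison argument yields $\limsup_{t\to\infty}E[Y^n]\le M_Y(n)$ for an explicit finite constant. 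Finally, since $0\le C_m\le Y$ and $0\le T\le \frac{\alpha}{\gamma}Y$ by positivity, I conclude $\limsup_{t\to\infty}E[C_m^n]\le M_Y(n)=:M_2(n)$ and $\limsup_{t\to\infty}E[T^n]\le (\frac{\alpha}{\gamma})^n M_Y(n)=:M_3(n)$, which completes the proof.
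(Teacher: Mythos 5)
Your proof is correct in substance but follows a genuinely different route from the paper's, and the difference is worth spelling out. For $E[C_s^n]$ the two arguments essentially coincide (the paper converts $E[C_s^{n+1}]\ge\{E[C_s^n]\}^{(n+1)/n}$ via H\"older and solves a Bernoulli-type differential inequality, whereas you absorb the lower-order terms pointwise into $c_1-c_2C_s^n$; yours is a touch cleaner). The real divergence is in handling $C_m$ and $T$. The paper bounds them one at a time, and to control the mixed terms it asserts that the marginal bound on $E[C_s^n]$ produces constants $L,\bar L$ with $E[C_sC_m^{n-1}]\le L\,E[C_m^{n-1}]$, $E[C_mT^n]\le \bar L\,E[T^n]$, and so on; since the components are correlated, such mixed-moment dominations do not follow from marginal moment bounds alone (they would require an almost-sure bound on $C_s$, resp.\ on $C_m$), and the paper moreover needs the side conditions $n\le 1+2/\gamma$ and $d_3'>\alpha\bar L$, which are absent from the theorem statement. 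Your joint treatment via $Y=C_m+\frac{\gamma}{\alpha}T$ sidesteps all of this: you verify that the $C_mT$ cross terms cancel both in the drift and, through $v^\top BB^\top v$ with $v=(0,1,\frac{\gamma}{\alpha})^\top$, in the quadratic variation, so $L(Y^n)$ involves only $Y$ and $C_s$, and Young's inequality plus comparison closes the bound with no extra hypotheses --- the exact stochastic analogue of the paper's own deterministic boundedness argument. What the paper's route buys is per-variable constants $M_2(n),M_3(n)$ written directly in the model parameters (modulo the unjustified $L,\bar L$); what yours buys is an argument that actually closes, valid without the hidden restrictions.

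Two small points to tighten. First, the moment bounds on $C_s$ that you feed into the $Y$-estimate should be taken uniform in $t$, not merely as $\limsup$ bounds; your differential inequality does give this, namely $E[C_s^q(t)]\le\max\{C_s^q(0),\,c_1/c_2\}$ for all $t$, the initial data being deterministic. Second, for $1<n<2$ the It\^o correction term $C_sY^{n-2}$ carries a \emph{negative} power of $Y$, so Young's inequality as you invoke it does not apply there; run the argument for $n\ge 2$ and recover $1\le n<2$ from Jensen's inequality, $E[Y^n]\le\left(E[Y^2]\right)^{n/2}$. (The paper's proof has the same blind spot in its terms $E[C_sC_m^{n-2}]$ and the exponent $\frac{n-2}{n}$.) With these repairs your argument is complete and, unlike the paper's, free of unjustified steps.
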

\begin{proof}
Let $U_1(C_s(t))=[C_s(t)]^n,~n\ge 1$. Using It\^o formula, we obtain from \eqref{CML_SDE_Model_Equations},
\[dU_1(C_s(t))=nC^n_s(t)\left[r\left(1-\frac{C_s(t)}{K}\right)-d^{\prime}_1+\frac{(n-1)h^2_{11}(t)}{2C^2_s(t)}\right]dt+nC^{n-1}_s(t)h_{11}(t)dW_1(t)\]
Integrating both sides from $0$ to $t$, we obtain
\[C^n_s(t)=C^n_s(0)+n\int_0^t C^n_s(u)\left[r\left(1-\frac{C_s(u)}{K}\right)-d^{\prime}_1+\frac{(n-1)h^2_{11}(u)}{2C^2_s(u)}\right]du+n\int_0^t C^{n-1}_s(u)h_{11}(u)dW_1(u)\]
Taking expectation on both sides and using Fubini's theorem as well as properties of It\^o integral, we obtain
\[E[C^n_s(t)]=C^n_s(0)+n(r-d^{\prime}_1)\int_0^t E[C^n_s(u)]du-\frac{nr}{K}\int_0^t E[C^{n+1}_s(u)]du+\frac{1}{2}n(n-1)\int_0^t E[h^2_{11}(u)C^{n-2}_s(u)]du\]
Differentiating both sides with respect to $t$, we obtain
\[\frac{dE[C^n_s(t)]}{dt}=n(r-d^{\prime}_1)E[C^n_s(t)]-\frac{nr}{K}E[C^{n+1}_s(t)]+\frac{1}{2}n(n-1)\left\{(r-d^{\prime}_1)^2E[C^{n-1}_s(t)]+\frac{r^2}{K^2}E[C^{n}_s(t)]\right\}\]
By H\"older's inequality, it follows that \[E\left[x^{n+1}\right] \ge \Big\{E\left[x^n\right]\Big\}^{\frac{n+1}{n}}.\] Using the above inequality, one can obtain
\[\frac{dE[C^n_s(t)]}{dt} \le n\left[r-d^{\prime}_1+\frac{(n-1)r^2}{2K^2}\right] E[C^n_s(t)]-\frac{nr}{K}\Big\{ E[C^{n}_s(t)]\Big\}^{\frac{n+1}{n}}+\frac{1}{2}n(n-1)(r-d^{\prime}_1)^2\Big\{ E[C^{n}_s(t)]\Big\}^{\frac{n-1}{n}}\]
Letting $E[C^{n}_s(t)]\ge 1$ leads to
\[\frac{dE[C^n_s(t)]}{dt} \le n\left\{r-d^{\prime}_1+\frac{(n-1)}{2}\left[\frac{r^2}{K^2}+(r-d^{\prime})^2\right]\right\} E[C^n_s(t)]-\frac{nr}{K}\Big\{ E[C^{n}_s(t)]\Big\}^{\frac{n+1}{n}}\]
Using the technique of solving Bernoulli differential equations and theory of differential inequalities, we obtain
\begin{eqnarray}
\label{CML_C_s_Bound}
\limsup_{t\to\infty} E[C^n_s(t)]\le \left(\frac{A_1 K}{r}\right)^n:=M_1(n),
\end{eqnarray}
where
\begin{equation}
A_1 = r-d^{\prime}_1+\frac{(n-1)}{2}\left[\frac{r^2}{K^2}+(r-d^{\prime})^2\right].
\end{equation}
Further, let $U_2(C_m(t))=[C_m(t)]^n,~n\ge 1$. Using It\^o formula, we obtain from \eqref{CML_SDE_Model_Equations},
\begin{eqnarray*}
dU_2(C_m(t))&=&nC^{n-1}_m(t)\Big[\mu C_s(t)-d^{\prime}_2C_m(t)-\gamma C_m(t)T(t)\Big]dt+\frac{1}{2}n(n-1)C^{n-2}_m(t)\Big[h^2_{21}(t)+h^2_{22}(t)\Big]dt\\
&&+~nC^{n-1}_m(t)\Big[h_{21}(t)dW_1(t)+h_{22}(t)dW_2(t)\Big]
\end{eqnarray*}
Taking expectation after integration from $0$ to $t$, we obtain
\begin{eqnarray*}
E[C^n_m(t)]&=&C^n_m(0)+n\mu\int_0^t E[C_s(u)C^{n-1}_m(u)]du-nd^{\prime}_2\int_0^t E[C^{n}_m(u)]du-n\gamma \int_0^t E[C^{n}_m(u)T(u)]du\\
&&+~\frac{1}{2}n(n-1)\bigg[\mu^2 \int_0^t E[C_s(u)C^{n-2}_m(u)]du+{d^{\prime}_2}^2 \int_0^t E[C^{n-1}_m(u)]du+\gamma^2 \int_0^t E[C^{n-1}_m(u)T(u)]du\bigg]
\end{eqnarray*}
After differentiation with respect to $t$, the above equation becomes
\begin{eqnarray*}
\frac{dE[C^n_m(t)]}{dt}&=&n\mu E[C_s(t)C^{n-1}_m(t)]-nd^{\prime}_2 E[C^{n}_m(t)]-n\gamma E[C^{n}_m(t)T(t)]\\
&&+~\frac{1}{2}n(n-1)\Big[\mu^2 E[C_s(t)C^{n-2}_m(t)]+{d^{\prime}_2}^2 E[C^{n-1}_m(t)]+\gamma^2 E[C^{n-1}_m(t)T(t)]\Big]
\end{eqnarray*}
From inequality \eqref{CML_C_s_Bound}, there exists $L>0$ such that
\begin{eqnarray}
E[C_s(t)C^{n-1}_m(t)]\le L E[C^{n-1}_m(t)]~~\text{and}~~E[C_s(t)C^{n-2}_m(t)]\le L E[C^{n-1}_m(t)].
\end{eqnarray}
Therefore,
\begin{eqnarray*}
\frac{dE[C^n_m(t)]}{dt}&\le&n\mu L E[C^{n-1}_m(t)]-nd^{\prime}_2 E[C^{n}_m(t)]-n\gamma E[C^{n}_m(t)T(t)]\\
&&+~\frac{1}{2}n(n-1)\Big[\mu^2L E[C^{n-2}_m(t)]+{d^{\prime}_2}^2 E[C^{n-1}_m(t)]+\gamma^2 E[C^{n-1}_m(t)T(t)]\Big]
\end{eqnarray*}
Using H\"older's inequality, we obtain
\begin{eqnarray*}
\frac{dE[C^n_m(t)]}{dt}&\le&n\mu L \Big\{E[C^{n}_m(t)]\Big\}^{\frac{n-1}{n}}-nd^{\prime}_2 E[C^{n}_m(t)]-n\gamma E[C^{n}_m(t)T(t)]\\
&&+~\frac{1}{2}n(n-1)\bigg[\mu^2L \Big\{E[C^{n}_m(t)]\Big\}^{\frac{n-2}{n}}+{d^{\prime}_2}^2 \Big\{E[C^{n}_m(t)]\Big\}^{\frac{n-1}{n}}+\gamma^2 E[C^{n-1}_m(t)T(t)]\bigg]
\end{eqnarray*}
Letting $E[C^{n}_m(t)]\ge 1$ leads to
\begin{eqnarray*}
\frac{dE[C^n_m(t)]}{dt}&\le&-nd^{\prime}_2 E[C^{n}_m(t)]+\frac{n}{2}\Big[2\mu L+(n-1)({d^{\prime}_2}^2+\mu^2L)\Big] \Big\{E[C^{n}_m(t)]\Big\}^{\frac{n-1}{n}}\\
&&-~\frac{n\gamma}{2} \Big[2+\gamma-n\gamma\Big] E[C^{n}_m(t)T(t)]
\end{eqnarray*}
For $\displaystyle{n\le 1+\frac{2}{\gamma}}$, we obtain
\begin{eqnarray*}
\frac{dE[C^n_m(t)]}{dt}&\le&-nd^{\prime}_2 E[C^{n}_m(t)]+\frac{n}{2}\Big[2\mu L+(n-1)({d^{\prime}_2}^2+\mu^2L)\Big] \Big\{E[C^{n}_m(t)]\Big\}^{\frac{n-1}{n}}
\end{eqnarray*}
Therefore
\begin{eqnarray}
\label{CML_C_m_Bound}
\limsup_{t\to\infty} E[C^n_m(t)]\le \left(\frac{A_2}{d^{\prime}_2}\right)^{n}:=M_2(n),
\end{eqnarray}
where
\begin{equation}
A_2 = \frac{1}{2}\Big[2\mu L+(n-1)({d^{\prime}_2}^2+\mu^2L)\Big].
\end{equation}
Finally, let $U_3(T(t))={[T(t)]}^n,~n\ge 1$. Using It\^o formula, we obtain from \eqref{CML_SDE_Model_Equations},
\begin{eqnarray*}
dU_3(T(t))&=&nT^{n-1}(t)\Big[\alpha C_m(t)T(t)-d^{\prime}_3T(t)\Big]dt+\frac{1}{2}n(n-1)T^{n-2}(t)\Big[h^2_{32}(t)+h^2_{33}(t)\Big]dt\\
&&+~nT^{n-1}(t)\Big[h_{32}(t)dW_2(t)+h_{33}(t)dW_3(t)\Big]
\end{eqnarray*}
Proceeding on similar lines, we obtain
\begin{eqnarray*}
\frac{dE[T^n(t)]}{dt}&=&n\alpha E[C_m(t)T^{n}(t)]-nd^{\prime}_3E[T^{n}(t)]
+\frac{1}{2}n(n-1)\Big[{d^{\prime}_3}^2 E[T^{n-1}(t)]+\alpha^2 E[C_m(t)T^{n-1}(t)]\Big]
\end{eqnarray*}
From inequality \eqref{CML_C_m_Bound}, there exists $\bar{L}>0$ such that
\[E[C_m(t)T^{n-1}(t)]\le \bar{L} E[T^{n-1}(t)],~~E[C_m(t)T^{n}(t)]\le \bar{L} E[T^{n}(t)].\]
Therefore,
\begin{eqnarray*}
\frac{dE[T^n(t)]}{dt}&\le&n\alpha \bar{L} E[T^{n}(t)]-nd^{\prime}_3E[T^{n}(t)]
+\frac{1}{2}n(n-1)\Big[{d^{\prime}_3}^2 E[T^{n-1}(t)]+\alpha^2\bar{L} E[T^{n-1}(t)]\Big]
\end{eqnarray*}
Using H\"older's inequality, we obtain
\begin{eqnarray*}
\frac{dE[T^n(t)]}{dt}&\le&-n\left(d^{\prime}_3-\alpha \bar{L}\right) E[T^{n}(t)]
+\frac{1}{2}n(n-1)({d^{\prime}_3}^2+\alpha^2\bar{L})\Big\{E[T^{n}(t)\Big\}^{\frac{n-1}{n}}
\end{eqnarray*}
Assuming $d^{\prime}_3>\alpha \bar{L}$, it follows that
\begin{eqnarray}
\label{CML_T_Bound}
\limsup_{t\to\infty} E[T^n(t)]\le \left(\frac{A_3}{d^{\prime}_3-\alpha \bar{L}}\right)^{n}:=M_3(n),
\end{eqnarray}
where
\begin{equation}
A_3 = \frac{1}{2}(n-1)({d^{\prime}_3}^2+\alpha^2\bar{L}).
\end{equation}
Hence the theorem follows.
\end{proof}

\begin{theorem}
\label{CML_Theorem_SDE_Stoch_Bounded}
Let $\displaystyle{X(t)=\left(C_s(t),C_m(t),T(t)\right)\in \mathbb{R}^3_+}$ be a solution of the stochastic system \eqref{CML_SDE_Model_Equations} with an initial condition $\displaystyle{X_0=(C_s(0),C_m(0),T(0)) \in \mathbb{R}^3_+}$. Then for every $\epsilon>0$, there is a constant $L$ for which
\[\limsup_{t\to\infty}P\Big\{||X(t)||\le L\Big\}\ge 1-\epsilon,\]
where $||X(t)||=\sqrt{C^2_s(t)+C^2_m(t)+T^2(t)}$. That is, the solution $X(t)$ is stochastically bounded (or bounded in probability).
\end{theorem}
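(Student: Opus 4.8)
The plan is to obtain stochastic boundedness as an essentially immediate corollary of the moment estimates already established in Theorem \ref{CML_Theorem_SDE_Moment_Bounded}, combined with Markov's inequality. First I would invoke Theorem \ref{CML_Theorem_SDE_Moment_Bounded} with $n=1$, which furnishes finite constants $M_1(1),M_2(1),M_3(1)$ satisfying $\limsup_{t\to\infty}E[C_s(t)]\le M_1(1)$, $\limsup_{t\to\infty}E[C_m(t)]\le M_2(1)$ and $\limsup_{t\to\infty}E[T(t)]\le M_3(1)$. The value $n=1$ is chosen because it lies safely inside the admissibility constraints used in that proof (the condition $n\le 1+2/\gamma$ needed for the $C_m$-estimate holds automatically, and the standing assumption $d_3^{\prime}>\alpha\bar{L}$ is retained); any other admissible $n$ would serve equally well, since all that is required downstream is finiteness of the bounds.

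Next I would control the expected norm. By Theorem \ref{CML_Theorem_SDE_Sol_Exist} the solution remains in $\mathbb{R}^3_+$ almost surely, so $C_s(t),C_m(t),T(t)\ge 0$ a.s., and the elementary inequality $\sqrt{a^2+b^2+c^2}\le a+b+c$ valid for $a,b,c\ge 0$ gives $||X(t)||\le C_s(t)+C_m(t)+T(t)$. Taking expectations and using linearity yields $E[||X(t)||]\le E[C_s(t)]+E[C_m(t)]+E[T(t)]$, whence, passing to the limit superior, $\limsup_{t\to\infty}E[||X(t)||]\le M$ with $M:=M_1(1)+M_2(1)+M_3(1)<\infty$.

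Finally I would apply Markov's inequality to the non-negative random variable $||X(t)||$: for any $L>0$ one has $P\{||X(t)||>L\}\le E[||X(t)||]/L$. Taking the limit superior in $t$ gives $\limsup_{t\to\infty}P\{||X(t)||>L\}\le M/L$, and passing to complements, $\liminf_{t\to\infty}P\{||X(t)||\le L\}\ge 1-M/L$. Given $\epsilon>0$, choosing $L=M/\epsilon$ forces the right-hand side to be at least $1-\epsilon$; since $\limsup\ge\liminf$, the stated inequality $\limsup_{t\to\infty}P\{||X(t)||\le L\}\ge 1-\epsilon$ follows, which establishes stochastic boundedness.

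There is no genuine obstacle here, as the statement is a direct consequence of Theorem \ref{CML_Theorem_SDE_Moment_Bounded}; the only points demanding a little care are ensuring the chosen moment order is admissible under that theorem's hypotheses (hence the conservative choice $n=1$), and correctly converting the tail estimate on $P\{||X(t)||>L\}$ into the claimed lower bound on $P\{||X(t)||\le L\}$ through the complementary event. As an alternative route one could instead bound $E[||X(t)||^2]=E[C_s^2(t)]+E[C_m^2(t)]+E[T^2(t)]$ and invoke Chebyshev's inequality, but this requires the second moments to be admissible (effectively $\gamma\le 2$ for the $C_m$-bound), so the first-moment argument via $||X(t)||\le C_s(t)+C_m(t)+T(t)$ is the cleaner and more robust choice.
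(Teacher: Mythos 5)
Your proof is correct, and it follows the same basic strategy as the paper — invoke Theorem \ref{CML_Theorem_SDE_Moment_Bounded} and finish with a Markov-type tail bound — but the execution differs in two ways worth noting. The paper takes $n=2$, writes $\|X(t)\|^2 = C_s^2(t)+C_m^2(t)+T^2(t)$ directly, upgrades the $\limsup$ bound to a uniform-in-time bound $\sup_{t\ge 0}E\left[\|X(t)\|^2\right]\le \bar{B}$ using continuity of $t\mapsto E\left[\|X(t)\|^2\right]$ on the initial compact interval, and then applies Chebyshev's inequality, choosing $\epsilon = \bar{B}/L^2$. You instead take $n=1$, use $\|X(t)\|\le C_s(t)+C_m(t)+T(t)$ (valid since the solution stays in $\mathbb{R}^3_+$ by Theorem \ref{CML_Theorem_SDE_Sol_Exist}), and apply Markov's inequality directly at the level of limits superior, choosing $L=M/\epsilon$. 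Your route buys two things: it sidesteps the admissibility constraint hidden in the $C_m$-estimate of Theorem \ref{CML_Theorem_SDE_Moment_Bounded} (the condition $n\le 1+2/\gamma$, which for the paper's choice $n=2$ silently requires $\gamma\le 2$ — harmless numerically here but a genuine hypothesis), and it avoids the paper's somewhat loosely written intermediate step producing the constants $M$, $\bar{M}$, $\bar{B}$ (where the quantifier on $M$ is left vague). What the paper's second-moment route buys in exchange is a quantitatively sharper tail: Chebyshev gives $P\{\|X(t)\|>L\}\lesssim 1/L^2$ so that $L$ scales like $\epsilon^{-1/2}$, versus your $L = M/\epsilon$; for the purely qualitative statement being proved, both are equally sufficient, and your handling of the $\limsup$/$\liminf$ passage through the complementary event is careful and correct.
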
	
\begin{proof}
From Theorem \ref{CML_Theorem_SDE_Moment_Bounded}, it follows that
\[\limsup_{t\to\infty}E[C^n_s(t)+C^n_m(t)+T^n(t)]\le \sum_{i=1}^{3}M_i(n),~n\ge1.\]
Putting $n=2$, we obtain
\[\limsup_{t\to\infty}E\left[||X(t)||^2\right]=\limsup_{t\to\infty}E[C^2_s(t)+C^2_m(t)+T^2(t)]\le \sum_{i=1}^{3}M_i(2).\]
This implies that for $M>0$, there is a $t^{\prime}>0$ such that
\[E\left[||X(t)||^2\right]\le M,~\forall~t>t^{\prime}.\]
Since $E\left[||X(t)||^2\right]$ is continuous function of $t$, there exists $\bar{M}> 0$ such that
\[E\left[||X(t)||^2\right] \le \bar{M},~\forall~t\in [0,t^{\prime}].\]
Assuming $B=\max\{{M},~\bar{M}\}$, we obtain
\[E\left[||X(t)||^2\right] \le \bar{B},~\forall~t\in [0,\infty).\]
Hence
\[\sup_{t\ge 0} E\left[||X(t)||^2\right] \le \bar{B},~\forall~t\in [0,\infty).\]
Using Tchebychev's inequality, it follows for $L>0$ that
\[P\Big\{||X(t)||> L\Big\} \le \frac{\sup_{t \ge 0} E\left[||X(t)||^2\right]}{L^2} \le \frac{\bar{B}}{L^2},\]
which implies
\[ P\Big\{||X(t)||\le L\Big\} = 1-P\Big\{||X(t)||> L\Big\} \ge 1-\frac{\bar{B}}{L^2}.\]
Taking $\epsilon=\frac{\bar{B}}{L^2}$, we obtain the following result
\[\limsup_{t\to\infty}P\Big\{||X(t)||\le L\Big\}\ge 1-\epsilon.\]
\end{proof}

\begin{theorem}
\label{CML_Theorem_SDE_Sol_Exp_Stable}
Suppose the following conditions are satisfied:
\[C_1>0,~~C_2>0,~~C_3>0~~\text{and}~~C_1C_2>C_3,\]
where
\begin{eqnarray*}
C_1&=&2({d^{\prime}_2}+{d^{\prime}_3})+r^2K^{-2}+\eta-2r,\\
C_2&=&2({d^{\prime}_2}+{d^{\prime}_3})(r^2K^{-2}+\eta)-(r-d^{\prime}_2)^2-(r-d^{\prime}_3)^2-(d^{\prime}_2-d^{\prime}_3)^2-4r(d^{\prime}_2+d^{\prime}_3),\\
C_3&=&2r(d^{\prime}_2-d^{\prime}_3)^2
+(2d^{\prime}_2+d^{\prime}_3)(r-d^{\prime}_2)(r-d^{\prime}_3)\\
&&-~\left(r^2K^{-2}+\eta\right)(d^{\prime}_2-d^{\prime}_3)^2-2d^{\prime}_2(r-d^{\prime}_3)^2-2d^{\prime}_3(r-d^{\prime}_2)^2,
\end{eqnarray*}
with $\eta=\min\{r^2,~2rK^{-1}\}$. Then the disease free equilibrium (trivial solution) of the stochastic system \eqref{CML_SDE_Model_Equations} is almost surely exponentially stable.
\end{theorem}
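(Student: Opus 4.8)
The plan is to prove almost sure exponential stability of the trivial solution by the Lyapunov-function method for It\^o systems (in the spirit of Mao \cite{Mao07}): exhibit a positive definite Lyapunov function $V$, apply the It\^o formula to $\log V(X(t))$, show that the resulting drift is bounded above by a strictly negative constant, and then invoke the strong law of large numbers for local martingales to deduce $\limsup_{t\to\infty}\frac{1}{t}\log\|X(t)\|<0$ almost surely. Concretely, I would take a quadratic Lyapunov function $V(X)=X^{\top}PX$ with a symmetric positive definite weight matrix $P$, the natural candidate being a diagonal $P=\mathrm{diag}(p_{1},p_{2},p_{3})$ whose entries are tuned so that the cross-coupling contributions cancel; then $V$ is comparable to $\|X\|^{2}$, and the decay rate of $\log V$ controls that of $\|X\|$.

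First I would compute, by It\^o's formula, the decomposition $d\log V=\big(LV/V-2X^{\top}PBB^{\top}PX/V^{2}\big)\,dt+dM(t)$, where $BB^{\top}$ is the diffusion matrix displayed in Section \ref{CML_Stochastic_Model}, $LV=2X^{\top}Pf+\mathrm{tr}(PBB^{\top})$, and $M(t)=\int_{0}^{t}2X^{\top}PH\,V^{-1}\,dW$ is a local martingale with $\langle M\rangle_{t}=\int_{0}^{t}4X^{\top}PBB^{\top}PX\,V^{-2}\,ds$. Restricting to the positively invariant region $\mathcal{D}$ established earlier and normalising $X$, the sign of the drift of $d\log V$ is governed, after reorganisation, by a quadratic form $-X^{\top}MX$ in $(C_{s},C_{m},T)$, the genuinely quadratic noise input $\tfrac{r^{2}}{K^{2}}C_{s}^{2}$ (from the logistic self-interaction term $\sqrt{C_{s}^{2}}$ in \eqref{CML_SDE_Model_Equations}) contributing the factor $r^{2}K^{-2}$ to the relevant diagonal entry.

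The decisive algebraic step is to show that the drift is negative, i.e. that the symmetric matrix $M$ is positive definite. Here the coupling rates $\mu,\gamma,\alpha$ and the stem-cell clearance $d_{1}^{\prime}$ are eliminated: the mixed terms $\mu C_{s}C_{m}$, $\gamma C_{m}^{2}T$, $\alpha C_{m}T^{2}$ are absorbed by Young's inequality, the estimate $r-d_{1}^{\prime}\le r$ removes $d_{1}^{\prime}$, and the bound $C_{s}\le K$ valid on $\mathcal{D}$ handles the logistic nonlinearity (the quantity $\eta=\min\{r^{2},2rK^{-1}\}$ records that two estimates are available for that nonlinearity, the sharper being retained). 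One is then left with $M$ built only from $r,K,d_{2}^{\prime},d_{3}^{\prime},\eta$ whose trace, sum of principal $2\times2$ minors and determinant are precisely $C_{1},C_{2},C_{3}$; its characteristic polynomial is therefore $\lambda^{3}+C_{1}\lambda^{2}+C_{2}\lambda+C_{3}$, and by the Routh--Hurwitz criterion $M\succ0$ exactly when $C_{1},C_{2},C_{3}>0$ and $C_{1}C_{2}>C_{3}$. This delivers $LV/V-2X^{\top}PBB^{\top}PX/V^{2}\le-2\lambda$ for some $\lambda>0$.

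To close, I would control the martingale: since $X(t)$ remains in $\mathcal{D}$ the integrand of $\langle M\rangle_{t}$ is bounded, so $\langle M\rangle_{t}$ grows at most linearly and $M(t)/t\to0$ almost surely by the strong law of large numbers for martingales; dividing $\log V(X(t))=\log V(X_{0})+\int_{0}^{t}(\cdots)\,ds+M(t)$ by $t$ and letting $t\to\infty$ yields $\limsup_{t\to\infty}\frac{1}{t}\log\|X(t)\|\le-\lambda<0$ almost surely. The main obstacle I anticipate is the square-root (Feller-type) degeneracy of the diffusion at the origin: because $h_{11},h_{22},h_{33}$ vanish like $\sqrt{\,\cdot\,}$ rather than linearly, both $LV/V$ and the It\^o-correction term are singular (of order $\|X\|^{-1}$) as $X\to0$, and the argument hinges on these singular, \emph{a priori} coupling-dependent contributions reorganising so that the net drift stays below $-2\lambda$ and exactly the combination $C_{1},C_{2},C_{3}$ survives; making this cancellation rigorous, and verifying the identification of $M$ with the stated cubic, is the crux of the proof.
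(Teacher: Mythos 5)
Your high-level framework (a Lyapunov function, It\^o's formula applied to a logarithm, negative definiteness certified via Routh--Hurwitz, and a martingale strong law to finish) matches the paper's, and you correctly guess that $C_1,C_2,C_3$ are the characteristic-polynomial coefficients of a $3\times3$ symmetric matrix. But the Lyapunov function you choose cannot deliver the reduction you describe, and this is exactly the ``crux'' you defer in your last sentence. The paper takes $V=\ln Z$ with $Z=\alpha\mu C_s+\alpha d_1'C_m+\gamma d_1'T$, the logarithm of a \emph{linear} form whose weights are engineered so that every coupling term cancels \emph{identically}, in both the drift and the diffusion: in the drift of $Z$ one has $\alpha\mu(-d_1'C_s)+\alpha d_1'(\mu C_s)=0$ and $\alpha d_1'(-\gamma C_mT)+\gamma d_1'(\alpha C_mT)=0$, leaving only $\alpha\mu rC_s(1-C_s/K)-\alpha d_1'd_2'C_m-\gamma d_1'd_3'T$, while for $v=(\alpha\mu,\alpha d_1',\gamma d_1')^{\top}$ the quadratic-variation input $v^{\top}BB^{\top}v$ collapses to $\alpha^2\mu^2\bigl(r^2C_s+r^2K^{-2}C_s^2\bigr)+\alpha^2{d_1'}^2{d_2'}^2C_m+\gamma^2{d_1'}^2{d_3'}^2T$ (the off-diagonal $C_mT$ entries of $BB^{\top}$ cancel against the diagonal ones). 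It is this exact cancellation that removes $\mu,\gamma,\alpha,d_1'$ and yields, after division by $Z^2$, the quadratic form $Y^{\top}PY$ in $Y=(\alpha\mu C_s,\alpha d_1'C_m,\gamma d_1'T)^{\top}$ whose invariants are the stated $C_1,C_2,C_3$. With your choice $V=X^{\top}\mathrm{diag}(p_1,p_2,p_3)X$, the drift $2X^{\top}Pf$ contains the two cubic monomials $-2p_2\gamma C_m^2T$ and $+2p_3\alpha C_mT^2$; these are different monomials, so no choice of $p_2,p_3$ cancels them, Young's inequality only trades the positive one for other positive cubics (e.g.\ $C_m^2T+T^3$), and a positive cubic can never be dominated by a negative definite \emph{quadratic} form uniformly on $\mathbb{R}^3_+$. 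Hence the matrix ``built only from $r,K,d_2',d_3',\eta$'' is not reachable along your route.

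Your two appeals to the region $\mathcal{D}$ are also invalid: $\mathcal{D}$ was shown to be positively invariant for the \emph{deterministic} system only. For the SDE \eqref{CML_SDE_Model_Equations} the noise does not vanish on the boundary of $\mathcal{D}$ (e.g.\ $h_{11}=\sqrt{(r-d_1')^2K+r^2}\neq0$ at $C_s=K$), so paths exit $\{C_s\le K\}$ with positive probability; Theorems \ref{CML_Theorem_SDE_Moment_Bounded} and \ref{CML_Theorem_SDE_Stoch_Bounded} provide only moment bounds and boundedness in probability, not an almost-sure compact invariant set. This breaks both your treatment of the logistic nonlinearity via $C_s\le K$ and your claim that $\langle M\rangle_t$ has a bounded integrand. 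The paper needs neither: the cubic term is controlled by the pointwise algebraic inequality $2rK^{-1}C_s^3+r^2C_s\ge\eta C_s^2$, valid for \emph{all} $C_s>0$ (split into the cases $C_s\ge1$ and $C_s<1$), which is the true origin of $\eta=\min\{r^2,2rK^{-1}\}$, and the remaining singular It\^o-correction terms are negative and simply discarded, so the square-root degeneracy never obstructs the drift estimate. (One caveat cuts both ways: the paper's final step, which quotes $\limsup_{t\to\infty}|W_i(t)|/t=0$, is itself informal because the stochastic integrands are state-dependent and unbounded near the origin; a fully rigorous finish would use the strong law of large numbers for local martingales or the exponential martingale inequality. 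That refinement, however, repairs the paper's last line --- it does not rescue the quadratic-form construction.)
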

\begin{proof}
Let \[V(X(t))=\ln\left[\alpha\mu C_s(t)+\alpha d^{\prime}_1C_m(t)+\gamma d^{\prime}_1 T(t)\right].\]
Using It\^o formula, we obtain
\begin{eqnarray*}
dV(X(t))&=&\Bigg[\frac{\alpha\mu rC_s(t)\left(1-{K}^{-1}C_s(t)\right)-\alpha d^{\prime}_1d^{\prime}_2C_m(t)-\gamma d^{\prime}_1d^{\prime}_3T(t)}{\alpha\mu C_s(t)+\alpha d^{\prime}_1C_m(t)+\gamma d^{\prime}_1 T(t)}\\
&&-~\frac{\alpha^2\mu^2r^2C_s(t)+\alpha^2\mu^2r^2K^{-2}C^2_s(t)+\alpha^2{d^{\prime}_1}^2{d^{\prime}_2}^2C_m(t)+\gamma^2{d^{\prime}_1}^2{d^{\prime}_3}^2 T(t)}{2\big[\alpha\mu C_s(t)+\alpha d^{\prime}_1C_m(t)+\gamma d^{\prime}_1 T(t)\big]^2}\Bigg]dt\\
&&+~\frac{\alpha(\mu h_{11}+d_1h_{21})dW_1(t)+d_1(\alpha h_{22}+\gamma h_{32})dW_2(t)+\gamma d_1h_{33}dW_3(t)}{\alpha\mu C_s(t)+\alpha d^{\prime}_1C_m(t)+\gamma d^{\prime}_1 T(t)}\\
&=&\frac{1}{2}\Big(\alpha\mu C_s(t)+\alpha d^{\prime}_1C_m(t)+\gamma d^{\prime}_1 T(t)\Big)^{-2}\Big[\alpha^2\mu^2\left(2r-r^2K^{-2}\right)C^2_s(t)-2\alpha^2{d^{\prime}_1}^2{d^{\prime}_2}^2C^2_m(t)\\
&&-~2\gamma^2{d^{\prime}_1}^2{d^{\prime}_3}^2T^2(t)-2\alpha^2\mu^2rK^{-1}C^3_s(t)-2\alpha^2\mu{d^{\prime}_1}({d^{\prime}_2}-r)C_s(t)C_m(t)\\
&&-~2\alpha\mu\gamma {d^{\prime}_1}({d^{\prime}_3}-r)C_s(t)T(t)-2\alpha\gamma {d^{\prime}_1}^2({d^{\prime}_2}+d^{\prime}_3)C_m(t)T(t)-2\alpha^2\mu d^{\prime}_1rK^{-1}C^2_s(t)C_m(t)\\
&&-~2\alpha\mu\gamma d^{\prime}_1rK^{-1}C^2_s(t)T(t)-\alpha^2\mu^2r^2C_s(t)-\alpha^2{d^{\prime}_1}^2{d^{\prime}_2}^2C_m(t)-\gamma^2{d^{\prime}_1}^2{d^{\prime}_3}^2 T(t)\Big]dt\\
&&+~\frac{\alpha(\mu h_{11}+d_1h_{21})dW_1(t)+d_1(\alpha h_{22}+\gamma h_{32})dW_2(t)+\gamma d_1h_{33}dW_3(t)}{\alpha\mu C_s(t)+\alpha d^{\prime}_1C_m(t)+\gamma d^{\prime}_1 T(t)}\\
&\le&\frac{1}{2}\Big(\alpha\mu C_s(t)+\alpha d^{\prime}_1C_m(t)+\gamma d^{\prime}_1 T(t)\Big)^{-2}\Big[\alpha^2\mu^2\left(2r-r^2K^{-2}-\eta \right)C^2_s(t)\\
&&-~2\alpha^2{d^{\prime}_1}^2{d^{\prime}_2}^2C^2_m(t)-2\gamma^2{d^{\prime}_1}^2{d^{\prime}_3}^2T^2(t)
-2\alpha^2\mu{d^{\prime}_1}({d^{\prime}_2}-r)C_s(t)C_m(t)\\
&&-~2\alpha\mu\gamma {d^{\prime}_1}({d^{\prime}_3}-r)C_s(t)T(t)-2\alpha\gamma {d^{\prime}_1}^2({d^{\prime}_2}+d^{\prime}_3)C_m(t)T(t)\Big]\\
&&+~\frac{\alpha(\mu h_{11}+d_1h_{21})dW_1(t)+d_1(\alpha h_{22}+\gamma h_{32})dW_2(t)+\gamma d_1h_{33}dW_3(t)}{\alpha\mu C_s(t)+\alpha d^{\prime}_1C_m(t)+\gamma d^{\prime}_1 T(t)},
\end{eqnarray*}	
where $\eta=\min\{r^2,~2rK^{-1}\}$.  Note that the following expression
\begin{eqnarray*}	
&&\alpha^2\mu^2\left(2r-r^2K^{-2}-\eta \right)C^2_s(t)-2\alpha^2{d^{\prime}_1}^2{d^{\prime}_2}^2C^2_m(t)
-2\gamma^2{d^{\prime}_1}^2{d^{\prime}_3}^2T^2(t)\\
&&-~2\alpha^2\mu{d^{\prime}_1}({d^{\prime}_2}-r)C_s(t)C_m(t)-2\alpha\mu\gamma {d^{\prime}_1}({d^{\prime}_3}-r)C_s(t)T(t)-2\alpha\gamma {d^{\prime}_1}^2({d^{\prime}_2}+d^{\prime}_3)C_m(t)T(t)
\end{eqnarray*}	
can be expressed as $Y^{\top}PY$, where
\[Y=
\begin{pmatrix}
\alpha\mu C_s(t) \\ \alpha d^{\prime}_1C_m(t) \\ \gamma d^{\prime}_1 T(t)	
\end{pmatrix}~~\text{and}~~	
P=
\begin{pmatrix}
2r-r^2K^{-2}-\eta & r-d^{\prime}_2 & r-d^{\prime}_3\\
r-d^{\prime}_2 & -2d^{\prime}_2 & -d^{\prime}_2-d^{\prime}_3\\
r-d^{\prime}_3 & -d^{\prime}_2-d^{\prime}_3 & -2d^{\prime}_3
\end{pmatrix}.		
\]	
The matrix $P$ is negative definite under the following conditions being satisfied:
\begin{equation}
C_1>0,~~C_2>0,~~C_3>0~~\text{and}~~C_1C_2>C_3,
\end{equation}
where
\begin{eqnarray*}
C_1&=&2({d^{\prime}_2}+{d^{\prime}_3})+r^2K^{-2}+\eta-2r,\\
C_2&=&2({d^{\prime}_2}+{d^{\prime}_3})(r^2K^{-2}+\eta)-(r-d^{\prime}_2)^2-(r-d^{\prime}_3)^2-(d^{\prime}_2-d^{\prime}_3)^2-4r(d^{\prime}_2+d^{\prime}_3),\\
C_3&=&2r(d^{\prime}_2-d^{\prime}_3)^2
+(2d^{\prime}_2+d^{\prime}_3)(r-d^{\prime}_2)(r-d^{\prime}_3)\\
&&-~\left(r^2K^{-2}+\eta\right)(d^{\prime}_2-d^{\prime}_3)^2-2d^{\prime}_2(r-d^{\prime}_3)^2-2d^{\prime}_3(r-d^{\prime}_2)^2.
\end{eqnarray*}	
Let $\lambda_{\text{max}}$ be the largest (negative) eigenvalue of $P$. Then
\[Y^{\top}PY\le -|\lambda_{\text{max}}|\left(\alpha^2\mu^2 C^2_s(t) + \alpha^2 {d^{\prime}_1}^2C^2_m(t) + \gamma^2 {d^{\prime}_1}^2 T^2(t)\right).\]
Therefore
\begin{eqnarray*}	
dV(X(t))&\le&-|\lambda_{\text{max}}|\frac{\left(\alpha^2\mu^2 C^2_s(t) + \alpha^2 {d^{\prime}_1}^2C^2_m(t) + \gamma^2 {d^{\prime}_1}^2 T^2(t)\right)}{2\left(\alpha\mu C_s(t)+\alpha d^{\prime}_1C_m(t)+\gamma d^{\prime}_1 T(t)\right)^{2}}dt\\
&&+~\frac{\alpha(\mu h_{11}+d_1h_{21})dW_1(t)+d_1(\alpha h_{22}+\gamma h_{32})dW_2(t)+\gamma d_1h_{33}dW_3(t)}{\alpha\mu C_s(t)+\alpha d^{\prime}_1C_m(t)+\gamma d^{\prime}_1 T(t)}
\end{eqnarray*}	
Note that
\begin{eqnarray}	
\label{Inequality_Square}
\left(\alpha\mu C_s(t)+\alpha d^{\prime}_1C_m(t)+\gamma d^{\prime}_1 T(t)\right)^{2} &\le& 3\left(\alpha^2\mu^2 C^2_s(t) + \alpha^2 {d^{\prime}_1}^2C^2_m(t) + \gamma^2 {d^{\prime}_1}^2 T^2(t)\right)
\end{eqnarray}	
Using inequality \eqref{Inequality_Square}, we obtain
\begin{eqnarray*}	
dV(X(t))&\le&-\frac{1}{6}|\lambda_{\text{max}}|dt+\frac{\alpha(\mu h_{11}+d_1h_{21})dW_1(t)+d_1(\alpha h_{22}+\gamma h_{32})dW_2(t)+\gamma d_1h_{33}dW_3(t)}{\alpha\mu C_s(t)+\alpha d^{\prime}_1C_m(t)+\gamma d^{\prime}_1 T(t)}
\end{eqnarray*}	
Integrating and using the fact that
\[\limsup_{t\to\infty}\frac{1}{t}\left|W_i(t)\right|=0,~~i=1,2,3,\]
we obtain
\[\limsup_{t\to\infty}\frac{1}{t}V(X(t))\le -\frac{1}{6}|\lambda_{\text{max}}|<0~~\text{a.s.}\]
\textit{i.e.,}\[\limsup_{t\to\infty}\frac{1}{t}\ln\left[\alpha\mu C_s(t)+\alpha d^{\prime}_1C_m(t)+\gamma d^{\prime}_1 T(t)\right]<0~~\text{a.s.}\]
Hence $C_s(t)\to 0$, $C_m(t)\to 0$ and $T(t)\to 0$ a.s. as $t\to\infty.$ This completes the proof of Theorem \ref{CML_Theorem_SDE_Sol_Exp_Stable}.
\end{proof}

We now determine the probability of CML extinction in the disease persistent stage. In order to ascertain this probability, we identify the possible state changes during the time interval $\Delta t$ tabulated in Table \ref{CML_Table_State_Changes}.
\begin{table}[ht]
\centering
{\begin{tabular}{llll}
\hline
i & State change $((\Delta X)_{i})$ & Probability $(p_{i})$ & Description \\
\hline
1 & $(1,0,0)^{\top}$ & $rC_{s}\left(1-\frac{C_{s}}{K}\right)\Delta t$ & Source of a CML stem cell\\
2 & $(-1,0,0)^{\top}$ & $d_{1}^{\prime}C_{s}\Delta t$ &  Neutralizing by Wnt or natural death of a CML stem cell\\
3 & $(0,1,0)^{\top}$ & $\mu C_{s}\Delta t$ & Production of a mature CML cell\\
4 & $(0,-1,0)^{\top}$ & $(\mu T+d_2^{\prime})C_{m}\Delta t$ & Neutralizing by CTL/ TKI or natural death of a mature CML cell\\
5 & $(0,0,1)^{\top}$ & $\alpha C_{m}T\Delta t$ & Development of a CTL cell\\
6 & $(0,0,-1)^{\top}$ & $d_3^{\prime}T\Delta t$ & Stimulating by IFN-$\alpha$ along with natural death of a CTL cell\\
\hline
\end{tabular}}
\caption{Possible state changes during $\Delta t$.}
\label{CML_Table_State_Changes}
\end{table}

We derive the offspring probability generating functions (PGFs) using the theory of multitype continuous-time branching process approximation \cite{Kimmel02,Vidurupola14}:
\begin{enumerate}
\item The offspring PGF for $C_{s}$ (corresponding to $i=1,2,3$ in Table \ref{CML_Table_State_Changes}) is
\[f_1(s_1,s_2)=\frac{rs_1^2+d_1^{\prime}+\mu s_1s_2}{r+d_1^{\prime}+\mu}.\]
\item The offspring PGF for $C_{m}$ (corresponding to $i=4,5$ in Table \ref{CML_Table_State_Changes}) is
\[f_2(s_1,s_2)=\frac{d_2^{\prime}}{d_2^{\prime}}=1.\]
\end{enumerate}
Consequently, the expectation matrix for the PGFs is
\[M=
\begin{bmatrix}
\frac{2r+\mu}{r+d_1^{\prime}+\mu} & 0\\
\frac{\mu}{r+d_1^{\prime}+\mu} & 0
\end{bmatrix}.\]
If $(q_1,q_2)\in (0,1)\times (0,1)$ is a fixed point, then
\[q_1=\frac{rq_1^2+d_1^{\prime}+\mu q_1q_2}{r+d_1^{\prime}+\mu}~\text{and}~q_2=1.\]
This implies $\displaystyle{q_1=1,~\frac{d_1^{\prime}}{r}}$ and $\displaystyle{q_2=1}$. For the disease persistent stage, $r>d_{1}^{\prime}$, we consider $\displaystyle{q_1=\frac{d_1^{\prime}}{r}}$.
Using Galton-Watson branching theory \cite{Harris63} , it follows that the probability of CML extinction is,
\[q=q_{1}^{m}=\left(\frac{d_1^{\prime}}{r}\right)^m,~\text{where}~m=C_{s}(0).\]

\section{Numerical Results}
\label{CML_Numerical_Results}

In this section, we dwell upon the numerical illustration of the results obtained for both the deterministic and the stochastic CML models. The computations were run using MATLAB\textsuperscript{\textregistered} for a set of parameter values given in Table \ref{CML_Table_Parameter_Values}, except for the cases with some particular parameter values (which will be specified in the discussion). Accordingly, we take the initial condition to be $\displaystyle{\left(C_{s}(0),C_{m}(0),T(0)\right)=\left(20,500,1000\right)}$, for the deterministic as well as the stochastic model simulations, with the number of sample paths for the latter (generated using the Euler-Maruyama method \cite{Allen10}) being $5000$. In order to numerically show the stability of the equilibria, we choose, two different values of the parameter $\mu$, namely, $\mu=1$ and $\mu=10$, to generate various scenarios. The set of all other parameter values chosen from Table \ref{CML_Table_Parameter_Values} with $\mu=1$ results in $P^{*}=0.3333(<1)$ which gives the existence of the equilibrium $\displaystyle{\bar{E}}=\left(38.0211,353.6850,0\right)$.
Similarly, for $\mu=10$ we get $P^{*}=3.3335 (>1)$ which gives the existence of the equilibrium $\displaystyle{E^{*}=\left(38.0211,1060.9756,68785.3432\right)}$.
\begin{table}[ht]
\centering
{\begin{tabular}{lll}
\hline
Parameter & Value [Reference] & Unit \\
\hline
$r$ & $1$ & \text{day}$^{-1}$\\
$K$ & $41.667$ \cite{Besse18} & \text{cells}~\text{ml}$^{-1}$\\
$\gamma$ & $3.647\times 10^{-6}$ \cite{Besse18} & \text{ml}~\text{cell}$^{-1}$~\text{day}$^{-1}$\\
$\mu $ & $10$ & \text{day}$^{-1}$\\
$d_{1}$ & $0.0375$ \cite{Besse18} & \text{day}$^{-1}$\\
$d_{2}$ & $0.0375$ \cite{Besse18} & \text{day}$^{-1}$\\
$d_{3}$ & $0.5$ \cite{Berezansky12} & \text{day}$^{-1}$\\
$\alpha$ & $0.00041$ \cite{Berezansky12} & \text{ml}~\text{cell}$^{-1}$~\text{day}$^{-1}$\\
$\beta_{1}$ & $0.05$ & \text{day}$^{-1}$\\
$\beta_{2}$ & $0.07$ \cite{Berezansky12} & \text{day}$^{-1}$\\
$\beta_{3}$ & $0.065$ \cite{Berezansky12} & \text{day}$^{-1}$\\
\hline
\end{tabular}}
\caption{Parameter values for numerical results}
\label{CML_Table_Parameter_Values}
\end{table}

We first examine the dynamics for the case $P^{*}<1$ corresponding to the case when $\bar{E}$ exists and is stable. The level of the CML stem cells increases rapidly, eventually stabilizing at the peak level of $C_{s}=38.0211$ shortly after the initial time (Figure \ref{CML_Local_Stability_Second_Equilibrium_A}). As for the mature CML cells, the level gradually decreases and eventually stabilizes at the minimum level of $C_{m}=353.6850$ (Figure \ref{CML_Local_Stability_Second_Equilibrium_B}). However, it takes more time to reach the corresponding stabilized level as compared to the case of CML stem cells. Further, the level of $T$ gradually decreases, eventually reaching zero level beyond which time there is cessation of the CTL immune response (Figure \ref{CML_Local_Stability_Second_Equilibrium_C}). We conclude that the reduction in the level of the mature CML cells, consequently, results in gradual decline with eventual non-response of the CTLs. Consequently, the patient stabilizes at the less severe disease equilibrium $\bar{E}$. We now examine the dynamics for the case $P^{*}>1$ when $E^{*}$ exists and is stable. Similar to the case of $P^{*}<1$, the population level of the CML stem cells increases very fast and in the long run stabilizes at the peak level of $C_{s}=38.0211$ (Figure \ref{CML_Local_Stability_Third_Equilibrium_A}). The level of the mature CML cells in this case increases very fast until the CTL immune response is stimulated to an efficient level. As a result of the affect of CTL immune response, the level of the mature CML cells keeps dropping. Consequently, due to less concentration of the mature CML cells, the generation of CTL immune response also decreases after elapse of some time and eventually the levels of the mature CML cells and the CTL immune response stabilize at the level of $C_{m}=1060.9756$ and $T=68785.3432$, respectively. These results are illustrated in Figure \ref{CML_Local_Stability_Third_Equilibrium_B} and Figure \ref{CML_Local_Stability_Third_Equilibrium_C}, respectively. The simulations for the case of $P^{*}>1$ suggests that sufficient level of mature CML cells results in continued CTL immune response, eventually sending the patient to the most severe disease equilibrium.

We now assess the pharmacokinetic impact resulting from administration of Wnt/$\beta$-catenin signaling ($\beta_{1}$), TKI ($\beta_{2}$) and IFN-$\alpha$ ($\beta_{3}$) for several scenarios. Firstly, we consider the impact of monotherapy of TKI, followed by dual therapy of TKI with  Wnt/$\beta$-catenin and IFN-$\alpha$, individually, with eventual consideration of all three. In order to present the effect of administration of TKI ($\beta_{2}$) in absence of Wnt/$\beta$-catenin signaling ($\beta_{1}$) and IFN-$\alpha$ ($\beta_{3}$), we ran the simulations
with $\beta_{1}=0$ and $\beta_{3}=0$, for five different values of $\beta_{2}$, namely, $0$, $0.05$, $0.1$, $0.5$ and $0.8$, having corresponding $P^{*}$ values of $8.7695$, $3.7584$, $2.3917$, $0.6118$ and $0.3927$. The simulations for CML stem cells, mature CML cells and the CTL immune response are presented in Figure \ref{CML_Effect_of_Beta_2}. From Figure \ref{CML_Effect_of_Beta_2_A}, it is observed that variation of $\beta_{2}$ has no impact at all on the dynamics of the CML stem cells. However, with an increase in the values of $\beta_{2}$, the values of the mature CML cells converge to ${C}_{m}^{*}$ for the first three values of $\beta_{2}$ (which corresponds to $P^{*}>1$), but converges to the lower value $\bar{C}_{m}$, with an increase for the other two values of $\beta_{2}$ (which corresponds to $P^{*}<1$) (Figure \ref{CML_Effect_of_Beta_2_B}). For the CTL immune response, one observes a decrease corresponding to increase in the values of $\beta_{2}$ (Figure \ref{CML_Effect_of_Beta_2_C}). For the last two values of $\beta_{2}$, the CTL immune response is eventually not sustained. From the simulations, we observe that the increase in $\beta_{2}$, upto a certain level does not affect the eventual stabilized level $E^{*}$ of the mature CML cells. Upon further increase of $\beta_{2}$, the impact is that the level of the mature CML cells decreases and eventually the progression stabilizes at a level $\bar{E}$. Thus, in this case, the administration of high dosage is essential for achieving desired therapeutic response in terms of reduction of mature CML cells, although there is no impact on CML stem cells at all. This result supports the observations made in \cite{Gallipoli11,Rea12}

The impact of administration of dual therapy of Wnt/$\beta$-catenin signaling ($\beta_{1}$) and TKI ($\beta_{2}$), in absence of IFN-$\alpha$ ($\beta_{3}$), are presented in Figure \ref{CML_Beta1_Beta2_Effect}. The set of $\left(\beta_{1},\beta_{2}\right)$ values chosen are $(0,0)$, $(0.5,0.05)$, $(0.1,0.2)$, $(0.1,0.5)$, $(0.3,0.5)$ and $(0.3,0.6)$, with the corresponding values of $P^{*}$ being $8.7695$, $1.8059$, $1.2408$, $0.5483$, $0.4211$ and $0.3551$. In this case, the values of CML stem cells show a decreasing trend resulting from the corresponding increase in the values of $\beta_{1}$ only (irrespective of $\beta_{2}$) (Figure \ref{CML_Beta1_Beta2_Effect_A}). For the values $(0.1,0.2)$, $(0.1,0.5)$ of $(\beta_{1},\beta_{2})$, the trajectories for CML stem cells converge along the same path as in the case where $\beta_{1}$ is left unchanged. A similar scenario is observed in case of $(0.3,0.5)$ and $(0.3,0.6)$. For the dynamics of mature CML cells, we observe that, for the case with $P^{*}>1$, there is an eventual convergence to a certain identical level of mature CML cells, while the convergent level gradually gets lowered with a gradual decrease in $P^{*}$ when $P^{*}<1$ (Figure \ref{CML_Beta1_Beta2_Effect_B}). The decrease in value of $P^{*}$ results in the corresponding decrease in the levels of CTL immune response and even converges to zero level for the cases where $P^{*}<1$ (Figure \ref{CML_Beta1_Beta2_Effect_C}). The simulation suggests that this dual therapy is not effective on mature CML cells in case of $P^{*}>1$. However, this therapy has an impact on CML stem cells, which was observed in \cite{Zhang13B,Agarwal17,Luis12}.

We now observe the pattern of dual therapy of TKI ($\beta_{2}$) and IFN-$\alpha$ ($\beta_{3}$), in absence of Wnt/$\beta$-catenin signaling ($\beta_{1}$), in Figure \ref{CML_Beta2_Beta3_Effect}. The set of $\left(\beta_{2},\beta_{3}\right)$ values chosen are $(0,0)$, $(0.1,0.1)$, $(0.3,0.1)$, $(0.3,0.2)$, $(0.8,0.05)$ and $(0.2,0.4)$, with the corresponding values of $P^{*}$ being $8.7695$, $2.9896$, $1.2179$, $1.6239$, $0.4363$ and $6.9233$. Interestingly, the absence of $\beta_{1}$ results in the dynamics of CML stem cells remaining unchanged irrespective of the various combinations of $\beta_{2}$ and $\beta_{3}$ being applicable (Figure \ref{CML_Beta2_Beta3_Effect_A}). From Figure \ref{CML_Beta2_Beta3_Effect_B}, it is observed for the values $(0.1,0.1)$ and $(0.3,0.1)$ of $\left(\beta_{2},\beta_{3}\right)$ that, the concentration of mature CML cells does not change in terms of its stabilized level due to unchanged IFN-$\alpha$ dosage. Further, in case of the value $(0.8,0.05)$ of $(\beta_{2},\beta_{3})$, the level of mature CML cells reaches a low level, in spite of $\beta_{3}$ being very less, since higher value of $\beta_{2}$ results in $P^{*}<1$. For the value $(0.2,0.4)$, we observe that the mature CML cell stabilizes at the least level (Figure \ref{CML_Beta2_Beta3_Effect_B}), as CTL response increases to very high level in this case (Figure \ref{CML_Beta2_Beta3_Effect_C}). This shows that a low dosage of $\beta_{2}$ can be more effective in combination with a relatively high dosage of $\beta_{3}$. The simulations for the values $(0.3,0.1)$ and $(0.3,0.2)$ show that the level of mature CML cells gets reduced (Figure \ref{CML_Beta2_Beta3_Effect_B}), due to increase in dosage of IFN-$\alpha$, resulting in stimulation of CTL immune response (Figure \ref{CML_Beta2_Beta3_Effect_C}), a phenomenon which was not observed in case of monotherapy of TKI. Thus, the increase in IFN-$\alpha$ results in stimulation of CTL immune response, which is responsible for eventual decrease in the level of mature CML cells. However, this dual therapy does not have any effect on the CML stem cell population.

Finally, we analyze the clinical implications of the combination therapy of Wnt/$\beta$-catenin signaling ($\beta_{1}$), TKI ($\beta_{2}$) and IFN-$\alpha$ ($\beta_{3}$), as illustrated in Figure \ref{CML_Combination_Effect}. The sets of $\left(\beta_{1},\beta_{2},\beta_{3}\right)$ values are taken to be $(0,0,0)$, $(0.1,0.2,0)$, $(0,0.2,0.1)$, $(0.1,0.2,0.1)$, $(0.6,0.5,0.4)$ and $(0.3,0.4,0.1)$. Accordingly, the corresponding values of $P^{*}$ are $8.7695$, $1.2408$, $1.7308$, $1.5509$, $1.1521$ and $0.6467$. The pattern of eventual convergence of the CML stem cells consistently decreases with the corresponding increase in the value of Wnt/$\beta$-catenin signaling (Figure \ref{CML_Combination_Effect_A}). The sets $(0.1,0.2,0)$ and $(0.1,0.2,0.1)$ of $\left(\beta_{1},\beta_{2},\beta_{3}\right)$ show that a small addition of IFN-$\alpha$ results in a sufficient therapeutic effectiveness so as to achieve reduction in the levels of mature CML cells (Figure \ref{CML_Combination_Effect_B}), with an increase of CTL immune response (Figure \ref{CML_Combination_Effect_C}). The combinations of $(0,0.2,0.1)$ and $(0.1,0.2,0.1)$ shows that the change in Wnt/$\beta$-catenin signaling (without changing TKI or IFN-$\alpha$) does not have any effect on the stabilized level of mature CML cells, but the convergence to the stable level is achieved at different times for different values of Wnt/$\beta$-catenin signaling (Figure \ref{CML_Combination_Effect_B}). The administration of the targeted therapy of Wnt/$\beta$-catenin results in decreasing level of CML stem cells (Figure \ref{CML_Combination_Effect_A}) and consequently a decline in the level of CTL immune response is observed (Figure \ref{CML_Combination_Effect_C}). Thus the combination dosage $(0.1,0.2,0.1)$ has a significant impact on the level of both CML stem cells and mature CML cells (observed as compared to the dosage $(0,0,0)$). A choice of high combination dosage of $(0.6,0.5,0.4)$ results in an adequate therapeutic response in terms of reduction in the concentration of CML stem cells as well as mature CML cells. Lastly, in case of $(0.3,0.4,0.1)$, a relatively high amount of Wnt/$\beta$-catenin signaling and TKI in combination with less IFN-$\alpha$ corresponding to $P^{*}<1$ results in decrease being observed for both the CML stem cells as well the mature CML cells. Consequently, the CTL immune response eventually goes to zero. Thus, a suitable combination of Wnt/$\beta$-catenin signaling, TKI and IFN-$\alpha$ can be very effective in the suppression and reduction of both the types of CML cells as well as in the stimulation of CTL response.

We qualitatively summarize the numerical results depicting the dynamics corresponding to the effect of therapeutic protocols, in Table \ref{CML_Table_Summary}.
\begin{table}[ht]
{\scriptsize
\centering
{\begin{tabular}{lllll}
\hline
& $\beta_{2}\ne 0$ & $\beta_{1}\ne 0, \beta_{2}\ne 0$ & $\beta_{2}\ne 0, \beta_{3}\ne 0$ & $\beta_{1}\ne 0, \beta_{2}\ne 0, \beta_{3}\ne 0$ \\
\hline
$C_{s}$ & Unchanged & Decline & Unchanged & Decline \\
$C_{m}$ & Decline at high dosage & Decline at high dosage & Decline & Fast decline \\
$T$ & Decline & Decline & Dosage dependent decline/increase & Dosage dependent decline/increase\\
\hline
\end{tabular}}
\caption{Qualitative summary of the therapeutic results.}
\label{CML_Table_Summary}
}
\end{table}

We now perform the numerical simulation for the stochastic model \eqref{CML_SDE_Model_Equations} and present the histograms, in order to understand the probability distribution, at $t=50$ (chosen for illustrative purpose) based on $5000$ independent simulations, for the CML stem cells, mature CML cells and CTL immune response. An illustrative sample path along with the mean sample path, mean sample path minus standard deviation and mean sample path plus standard deviation is presented in Figures \ref{CML_SDE_A}, \ref{CML_SDE_B} and \ref{CML_SDE_C}, for the CML stem cells, mature CML cells and CTL immune response, respectively. The mean ($\mu_{X_i}$) and standard deviation ($\sigma_{X_i}$) at $t=50$, resulting from the $5000$ simulations are
\[\mu_{C_{s}}=37.6479,~\mu_{C_{m}}=1054.2897,~\mu_{T}=68961.1024\]
and
\[\sigma_{C_{s}}=4.3124,~\sigma_{C_{m}}=452.4636,~\sigma_{T}=37889.7108,\]
respectively. The standard deviation for the case of CML stem cells is relatively less than that of mature CML cells and CTL immune response, with respect to the population size. Moreover, the stochastic mean, $\left(\mu_{C_{s}},\mu_{C_{m}},\mu_{T}\right)$ is very close to its deterministic equilibrium value $E^{*}$. The stochastic fluctuation has a lesser impact on the dynamics of CML stem cells and it has a greater impact on CTL population.

Further, we present the probability histogram plots in Figure \ref{CML_Histogram}, and observe that the probability distribution of CML stem cells and mature CML cells fit the normal curve. The mean and variance sets used to plot the normal distribution in Figures \ref{CML_Hist_A}, \ref{CML_Hist_B} and \ref{CML_Hist_C} are $\left(\mu_{C_{s}},\sigma_{C_{s}}^{2}\right)$, $\left(\mu_{C_{m}},\sigma_{C_{m}}^{2}\right)$ and $\left(\mu_{T},\sigma_{T}^{2}\right)$, respectively.

Recall that the probability of CML extinction is dependent on both Wnt/$\beta$-catenin signaling ($\beta_{1}$) as well as the initial level of the CML stem cells ($C_{s}(0)$) which is depicted in Figure \ref{CML_Extinction}. Figure \ref{CML_Extinction} illustrates the probability of disease extinction depending on various dosages of Wnt/$\beta$-catenin signaling as well as various initial levels of the CML stem cells. It is observed from Figure \ref{CML_Extinction_Cs} that the probability of CML extinction is inversely proportional to the initial value of CML stem cells and it becomes very less (near to zero) if the initial CML stem cells exceeds $2$. Figure \ref{CML_Extinction_Beta1} suggests that the probability of disease extinction can be improved by increasing the dosage of Wnt/$\beta$-catenin signaling. It can be seen that the probability of CML extinction is greater when the initial level of the CML stem cells is less and Wnt/$\beta$-catenin signaling is high. The tabulated values of the parameters (Table \ref{CML_Table_Parameter_Values}) gives $P^{*}=3.26051$ and $q_{1}=0.1075$. Then the probability of CML extinction is $\displaystyle{q=\left(0.1075\right)^{C_{s}(0)}}$. As an illustration, we consider $C_{s}(0)=20$. Then $\displaystyle{q=\left(0.1075\right)^{C_{s}(0)}\approx 4.2478\times 10^{-20}}$, which shows that the probability of cure is almost non-existent and the prognosis is very poor. If we want the probability of CML extinction to be greater than a desired value (say $q_{e})$, then the dosage of the required Wnt/$\beta$-catenin signaling is as follows,
\[q=q_{1}^{C_{s}(0)}>q_{e}\Longrightarrow C_{s}(0)\ln \left(\frac{d_{1}+\beta_{1}}{r}\right)>\ln q_{e}\Longrightarrow \beta_{1}>r\exp\left({\frac{\ln q_{e}}{C_s(0)}}\right)-d_{1}.\]

\section{Conclusion}
\label{CML_Conclusion}

This work focuses on the modeling and quantitative analysis for the disease progression dynamics of CML in consideration with CTL immune response. The pharmacokinetics in presence of the traditional therapeutic protocol of IFN-$\alpha$ and TKI is extended to include Wnt/$\beta$-catenin signaling. The deterministic model is proposed and analyzed. The stability of the severe and less severe equilibria (in case of the natural progression rate of CML stem cells exceeding the sum of the rate of natural clearance and the efficacy of TKI dosage) is presented with subject to a condition on a threshold parameter $P^*$. However, the deterministic model does not provide any insight into the likelihood of remission. This is addressed by the introduction
of stochastic model which encapsulates this critical aspect of long term prognosis of CML remission and the probability of extinction of the disease. The well-posedness of both the deterministic model and stochastic model (in probability sense) is proved by showing the existence, uniqueness, boundedness and non-negativity of the solution. The stochastic mean solution is very close to its deterministic severe equilibrium value $E^*$. The stochastic noise is relatively more impactful on the population of mature CML cells and CTL cells. The stochastic model also showed the almost surely exponential stability of the disease free equilibrium under certain conditions. Further, the model predicts improved therapeutic response upon administration of Wnt/$\beta$-catenin signaling in addition to the IFN-$\alpha$ and TKI, particularly due to the resulting reduced side effects of the latter two. The results obtained demonstrates that the combination therapy is most successful in terms of the reduction of CML stem cells and mature CML cells with the stimulation of CTL immune response.

\clearpage

\begin{figure}[!ht]
\centering
\begin{subfigure}[b]{0.25\textwidth}
\includegraphics[width=\textwidth,height=4cm]{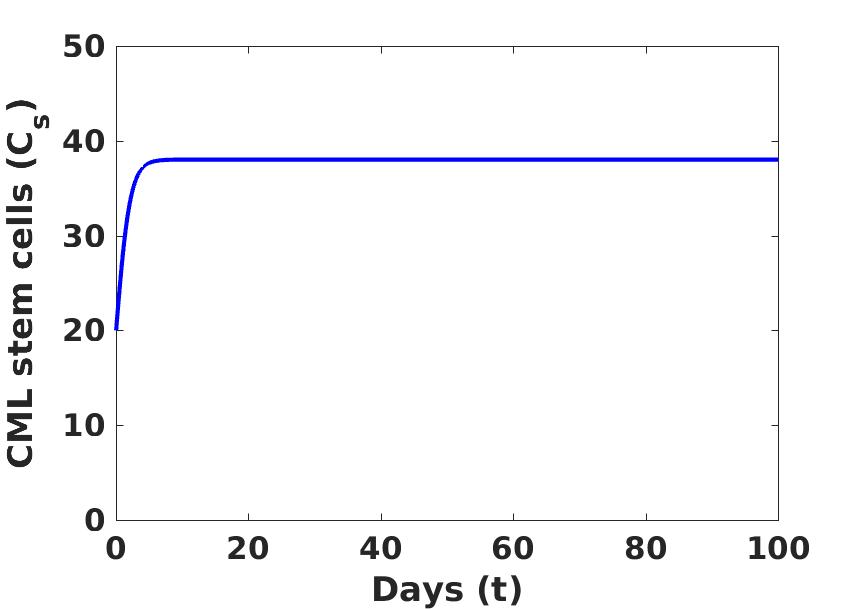}
\caption{}
\label{CML_Local_Stability_Second_Equilibrium_A}
\end{subfigure}
\quad
\begin{subfigure}[b]{0.25\textwidth}
\includegraphics[width=\textwidth,height=4cm]{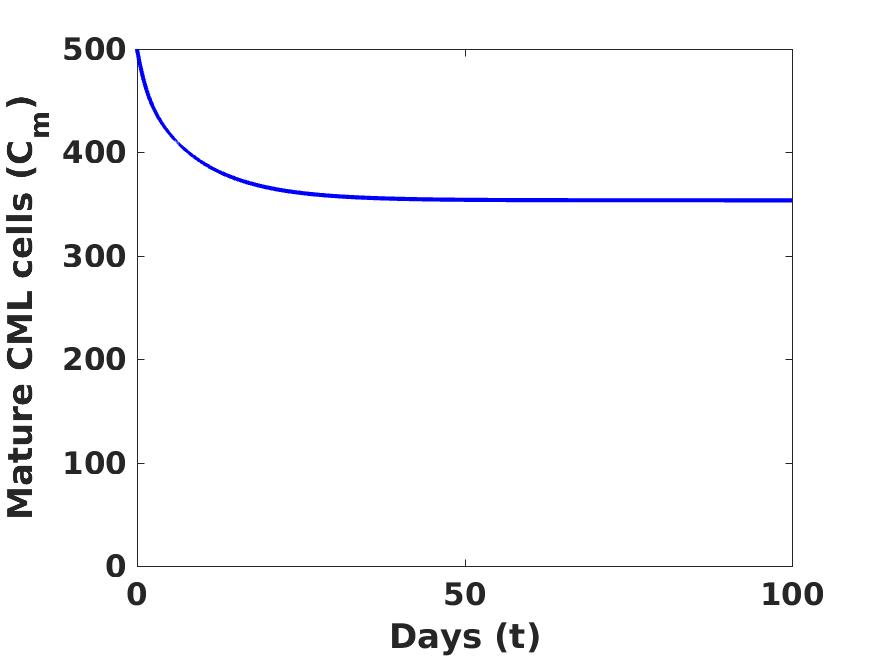}
\caption{}
\label{CML_Local_Stability_Second_Equilibrium_B}
\end{subfigure}
\quad
\begin{subfigure}[b]{0.25\textwidth}
\includegraphics[width=\textwidth,height=4cm]{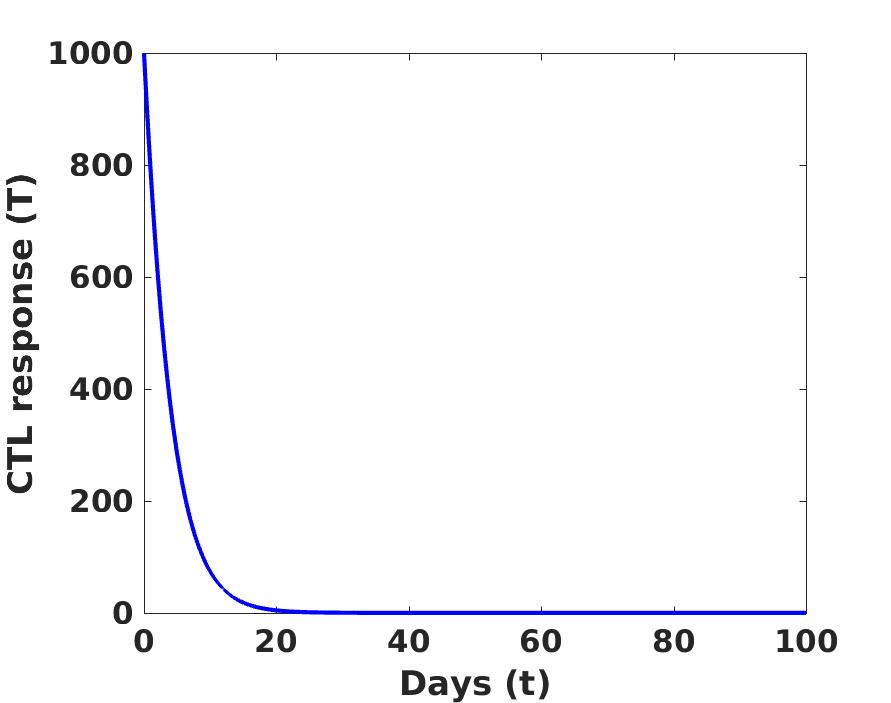}
\caption{}
\label{CML_Local_Stability_Second_Equilibrium_C}
\end{subfigure}
\caption{Dynamics of $C_{s}$, $C_{m}$ and $T$ for the case $P^{*}<1$.}
\label{CML_Local_Stability_Second_Equilibrium}
\end{figure}
\vspace{1cm}

\begin{figure}[!ht]
\centering
\begin{subfigure}[b]{0.25\textwidth}
\includegraphics[width=\textwidth,height=4cm]{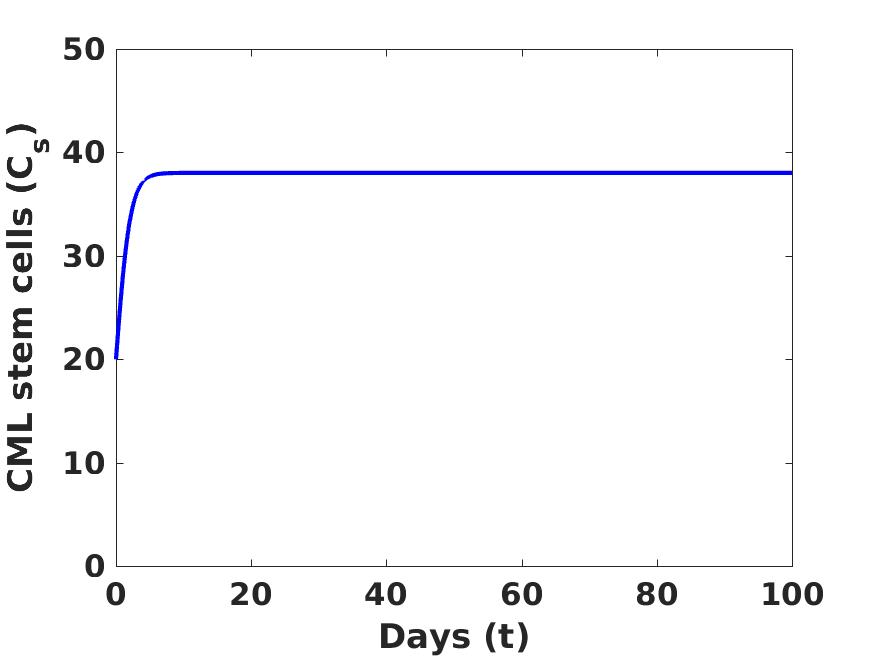}
\caption{}
\label{CML_Local_Stability_Third_Equilibrium_A}
\end{subfigure}
\quad
\begin{subfigure}[b]{0.25\textwidth}
\includegraphics[width=\textwidth,height=4cm]{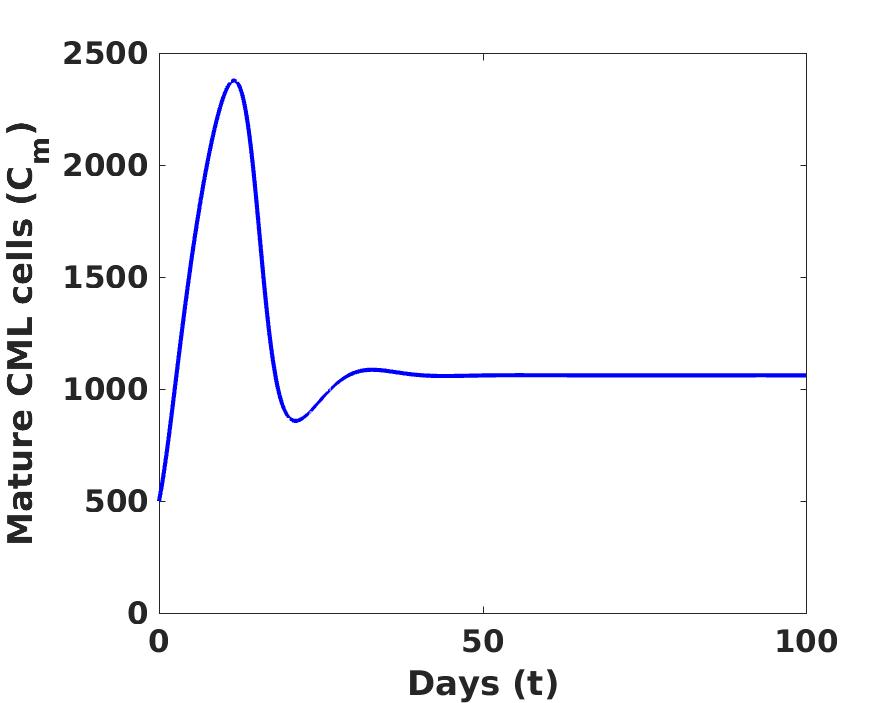}
\caption{}
\label{CML_Local_Stability_Third_Equilibrium_B}
\end{subfigure}
\quad
\begin{subfigure}[b]{0.25\textwidth}
\includegraphics[width=\textwidth,height=4cm]{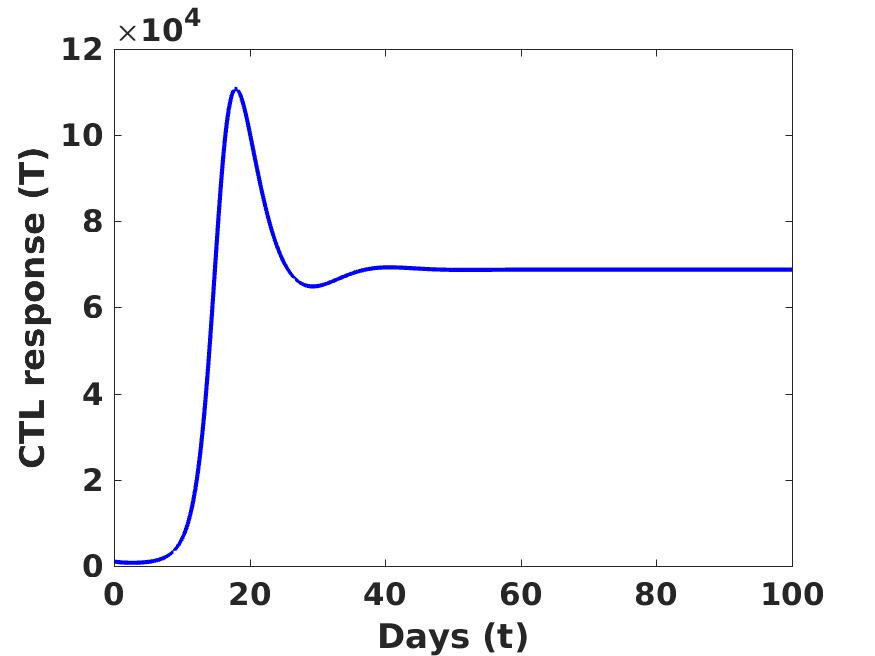}
\caption{}
\label{CML_Local_Stability_Third_Equilibrium_C}
\end{subfigure}
\caption{Dynamics of $C_{s}$, $C_{m}$ and $T$ for the case $P^{*}>1$.}
\label{CML_Local_Stability_Third_Equilibrium}
\end{figure}
\vspace{1cm}

\begin{figure}[!ht]
\centering
\begin{subfigure}[b]{0.25\textwidth}
\includegraphics[width=\textwidth,height=4cm]{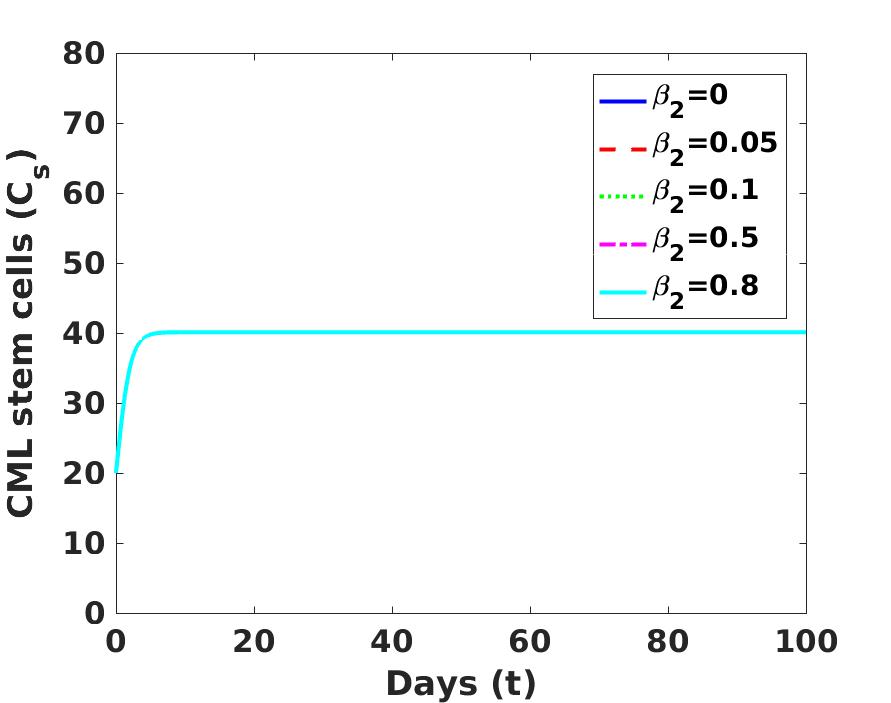}
\caption{}
\label{CML_Effect_of_Beta_2_A}
\end{subfigure}
\quad
\begin{subfigure}[b]{0.25\textwidth}
\includegraphics[width=\textwidth,height=4cm]{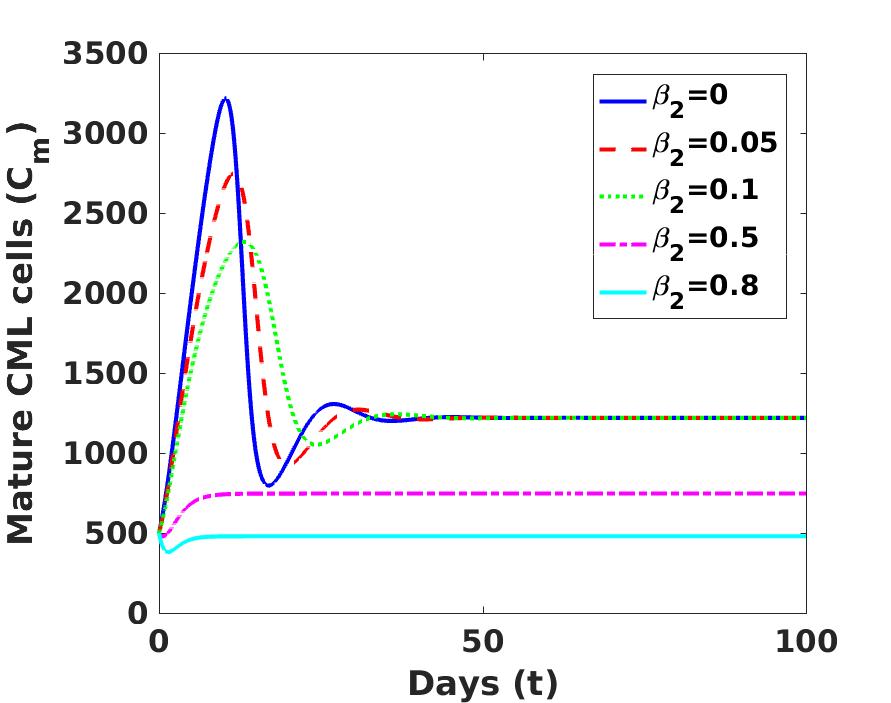}
\caption{}
\label{CML_Effect_of_Beta_2_B}
\end{subfigure}
\quad
\begin{subfigure}[b]{0.25\textwidth}
\includegraphics[width=\textwidth,height=4cm]{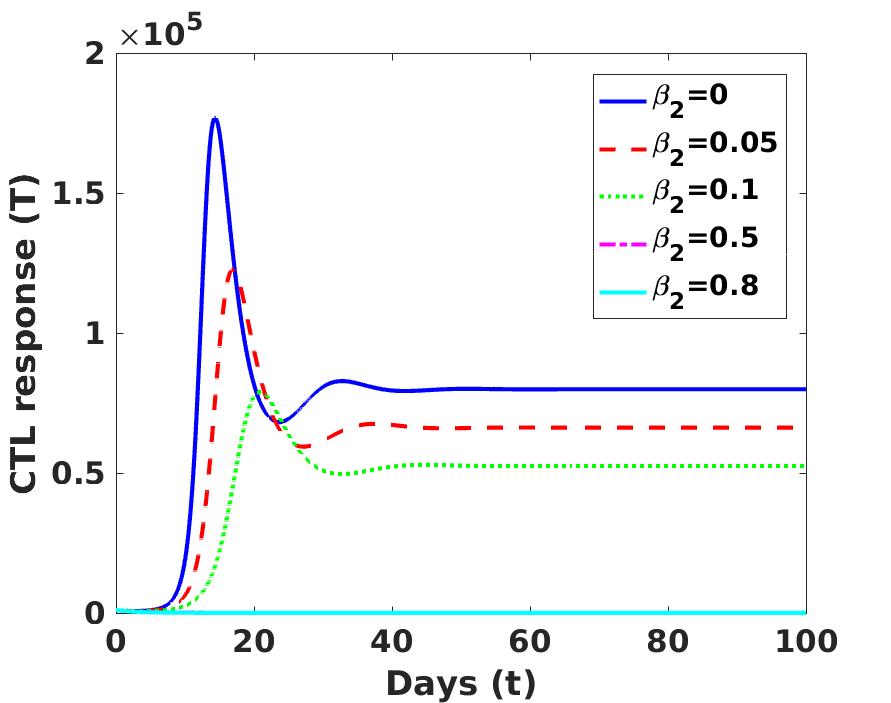}
\caption{}
\label{CML_Effect_of_Beta_2_C}
\end{subfigure}
\caption{Effect of $\beta_{2}$ on $C_{s}$, $C_{m}$ and $T$.}
\label{CML_Effect_of_Beta_2}
\end{figure}
\vspace{1cm}

\clearpage

\begin{figure}[!ht]
\centering
\begin{subfigure}[b]{0.25\textwidth}
\includegraphics[width=\textwidth,height=4cm]{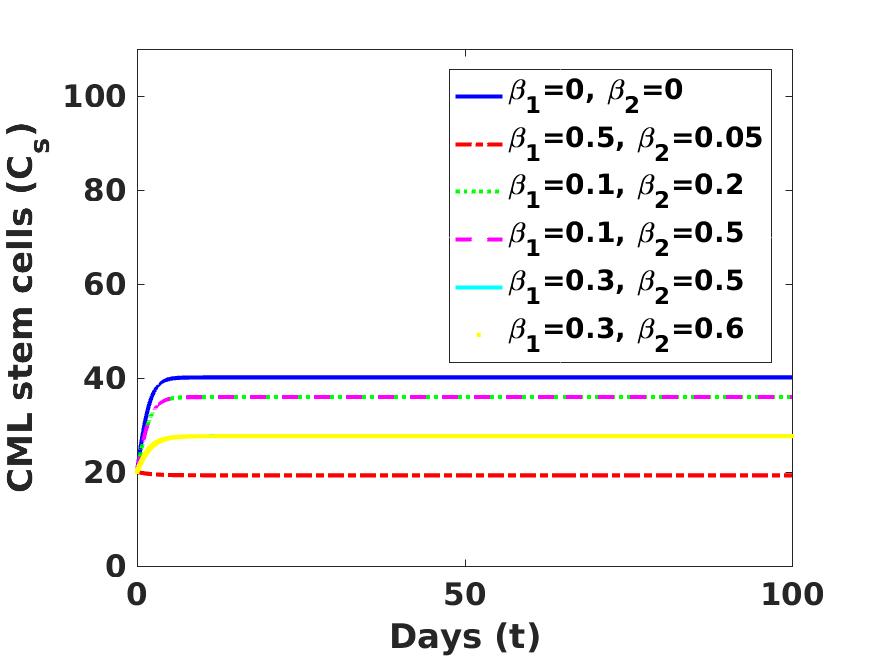}
\caption{}
\label{CML_Beta1_Beta2_Effect_A}
\end{subfigure}
\quad
\begin{subfigure}[b]{0.25\textwidth}
\includegraphics[width=\textwidth,height=4cm]{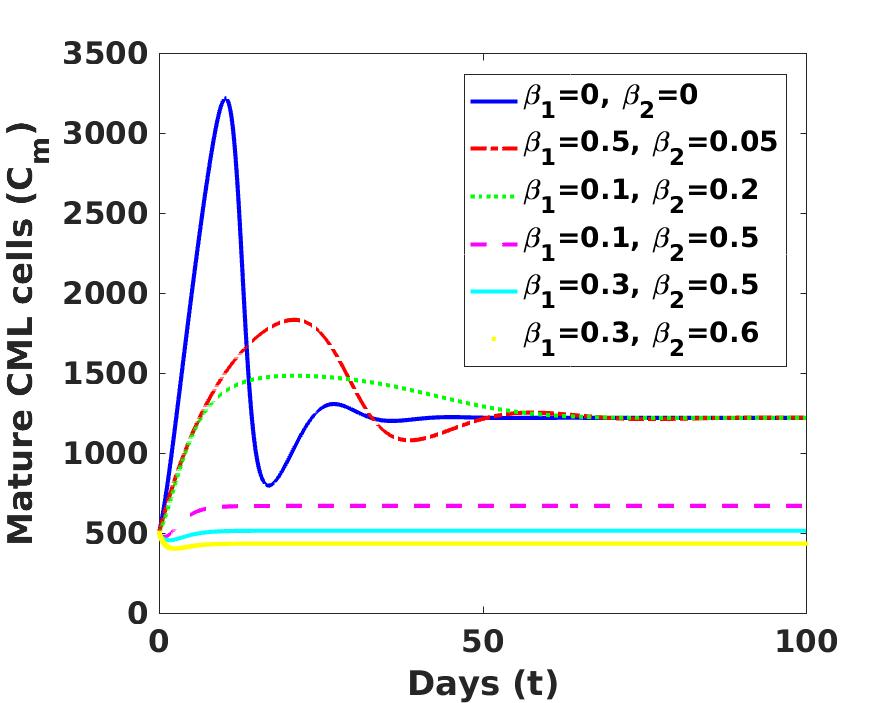}
\caption{}
\label{CML_Beta1_Beta2_Effect_B}
\end{subfigure}
\quad
\begin{subfigure}[b]{0.25\textwidth}
\includegraphics[width=\textwidth,height=4cm]{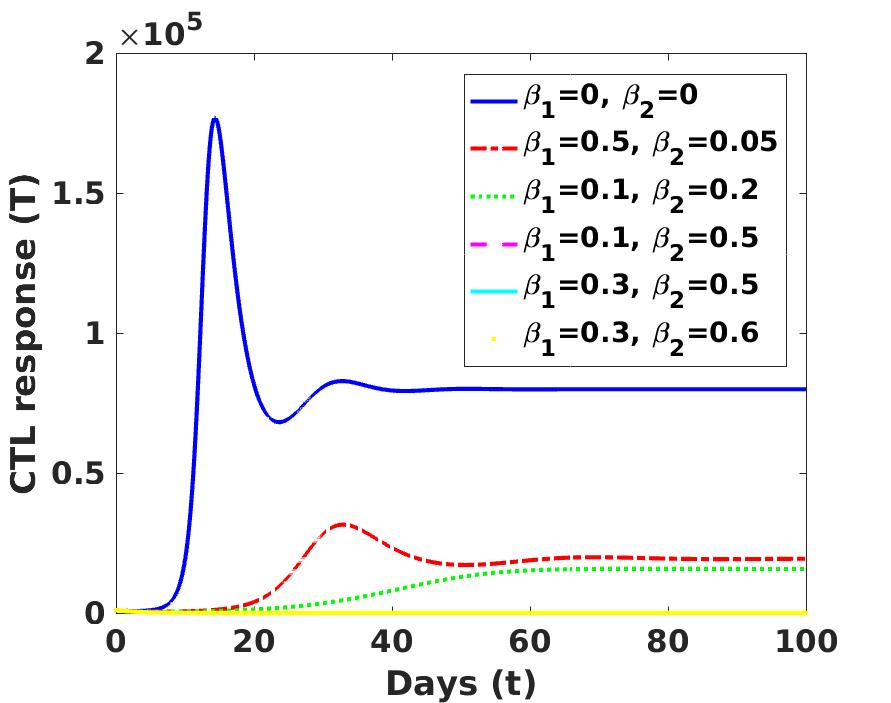}
\caption{}
\label{CML_Beta1_Beta2_Effect_C}
\end{subfigure}
\caption{Effect of dual therapy of $\beta_{1}$ and $\beta_{2}$ on $C_{s}$, $C_{m}$ and $T$.}
\label{CML_Beta1_Beta2_Effect}
\end{figure}
\vspace{1cm}

\begin{figure}[!ht]
\centering
\begin{subfigure}[b]{0.25\textwidth}
\includegraphics[width=\textwidth,height=4cm]{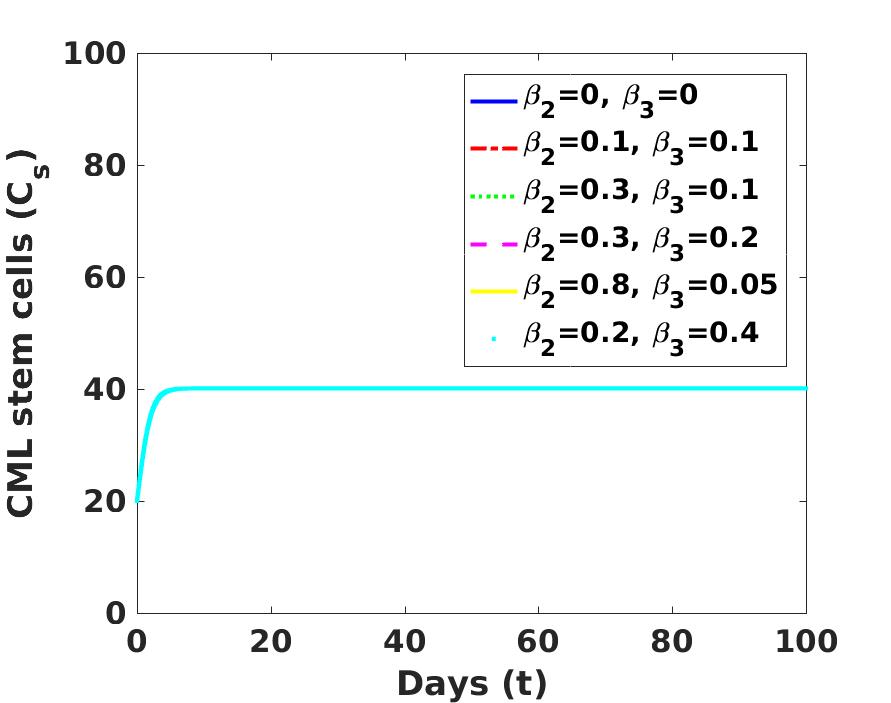}
\caption{}
\label{CML_Beta2_Beta3_Effect_A}
\end{subfigure}
\quad
\begin{subfigure}[b]{0.25\textwidth}
\includegraphics[width=\textwidth,height=4cm]{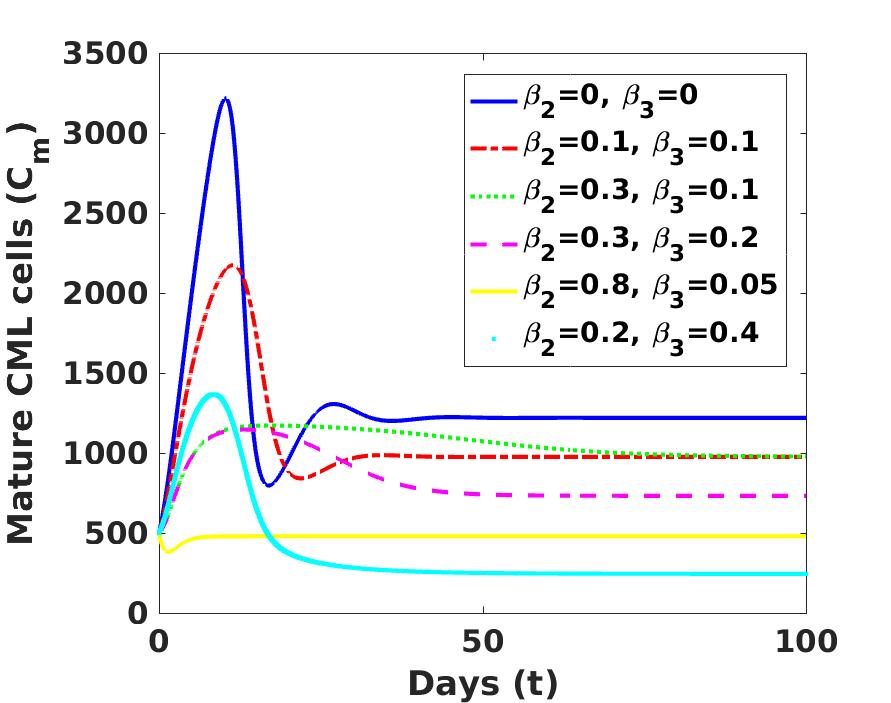}
\caption{}
\label{CML_Beta2_Beta3_Effect_B}
\end{subfigure}
\quad
\begin{subfigure}[b]{0.25\textwidth}
\includegraphics[width=\textwidth,height=4cm]{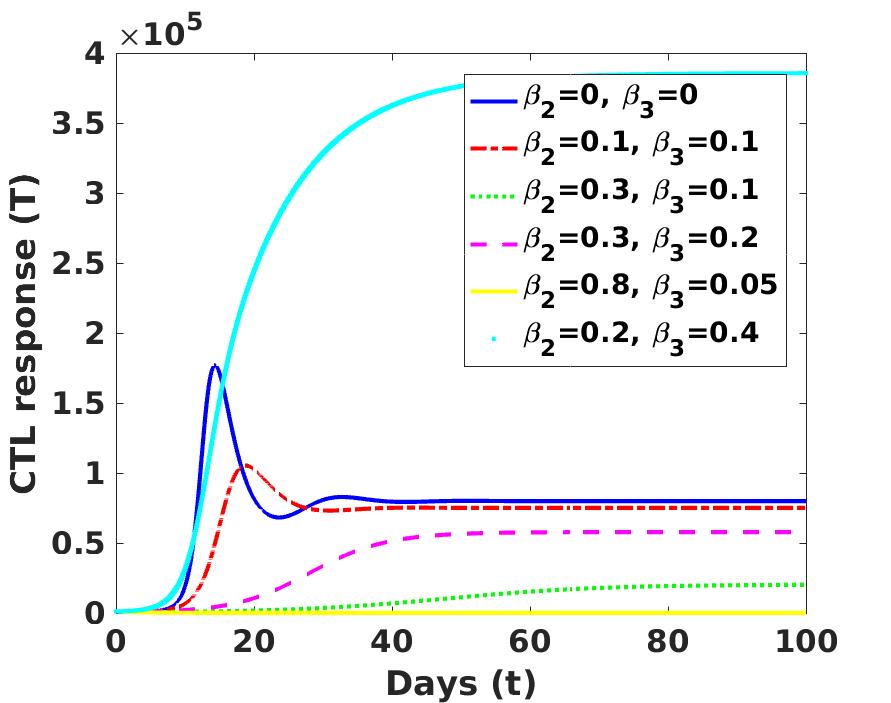}
\caption{}
\label{CML_Beta2_Beta3_Effect_C}
\end{subfigure}
\caption{Effect of dual therapy of $\beta_{2}$ and $\beta_{3}$ on $C_{s}$, $C_{m}$ and $T$.}
\label{CML_Beta2_Beta3_Effect}
\end{figure}
\vspace{1cm}

\begin{figure}[!ht]
\centering
\begin{subfigure}[b]{0.25\textwidth}
\includegraphics[width=\textwidth,height=4cm]{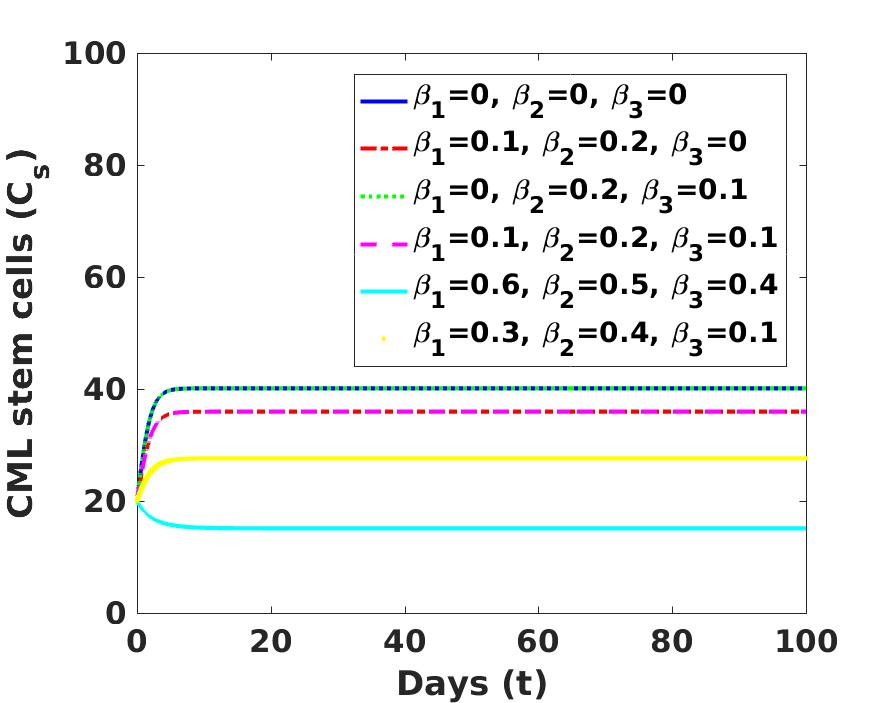}
\caption{}
\label{CML_Combination_Effect_A}
\end{subfigure}
\quad
\begin{subfigure}[b]{0.25\textwidth}
\includegraphics[width=\textwidth,height=4cm]{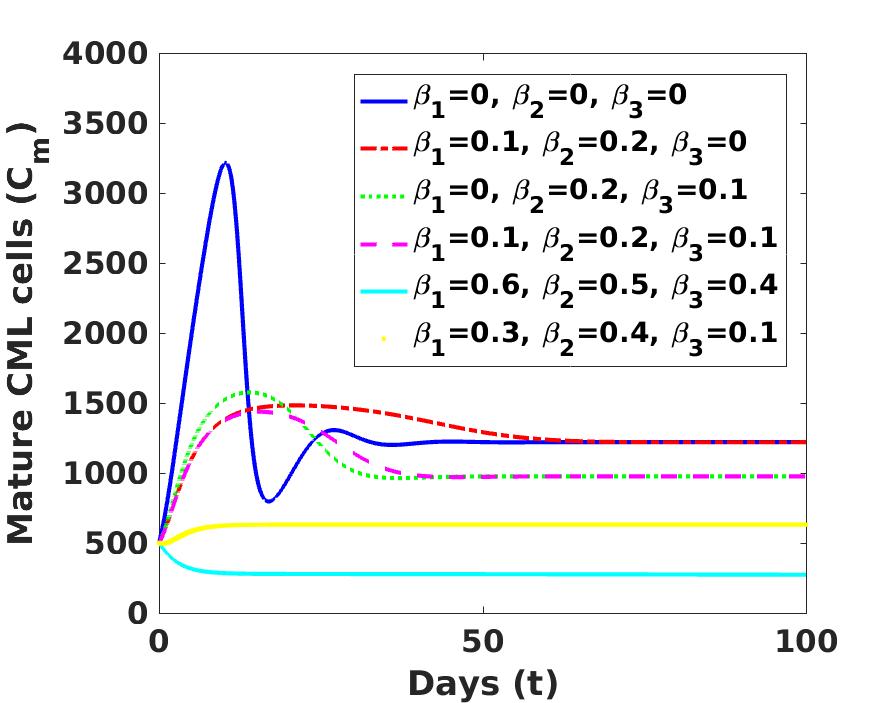}
\caption{}
\label{CML_Combination_Effect_B}
\end{subfigure}
\quad
\begin{subfigure}[b]{0.25\textwidth}
\includegraphics[width=\textwidth,height=4cm]{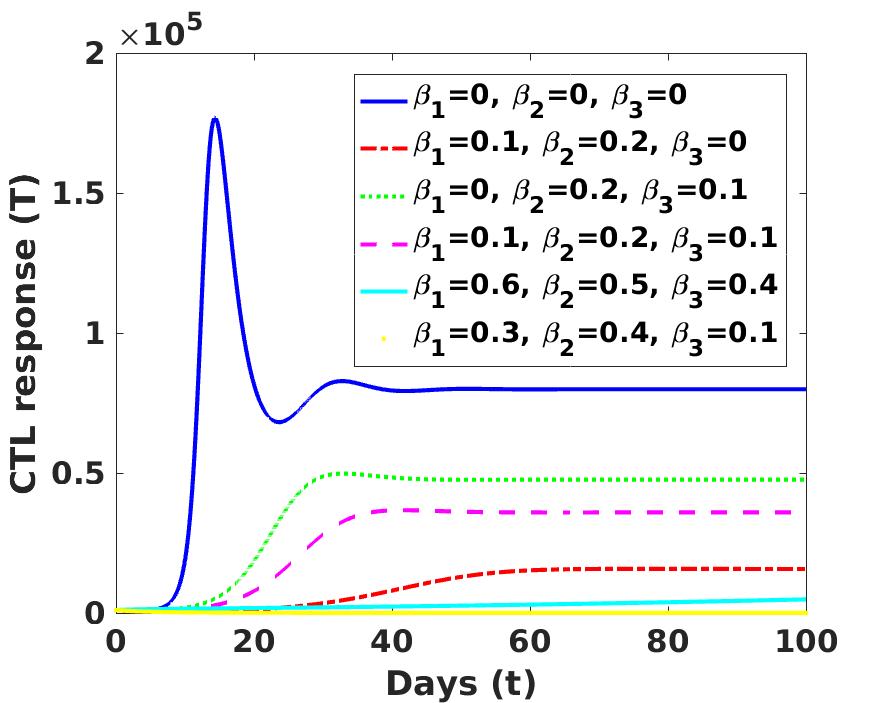}
\caption{}
\label{CML_Combination_Effect_C}
\end{subfigure}
\caption{Effect of combination therapy of $\beta_{1}$, $\beta_{2}$ and $\beta_{3}$ on $C_{s}$, $C_{m}$ and $T$.}
\label{CML_Combination_Effect}
\end{figure}
\vspace{1cm}
\clearpage

\begin{figure}[!ht]
\centering
\begin{subfigure}[b]{0.25\textwidth}
\includegraphics[width=\textwidth,height=4cm]{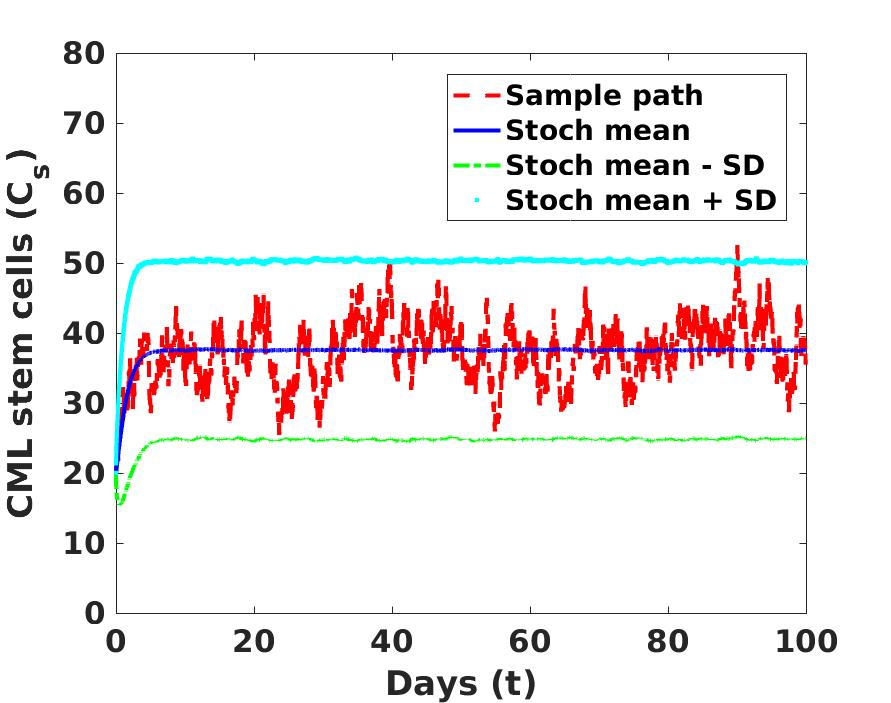}
\caption{}
\label{CML_SDE_A}
\end{subfigure}
\quad
\begin{subfigure}[b]{0.25\textwidth}
\includegraphics[width=\textwidth,height=4cm]{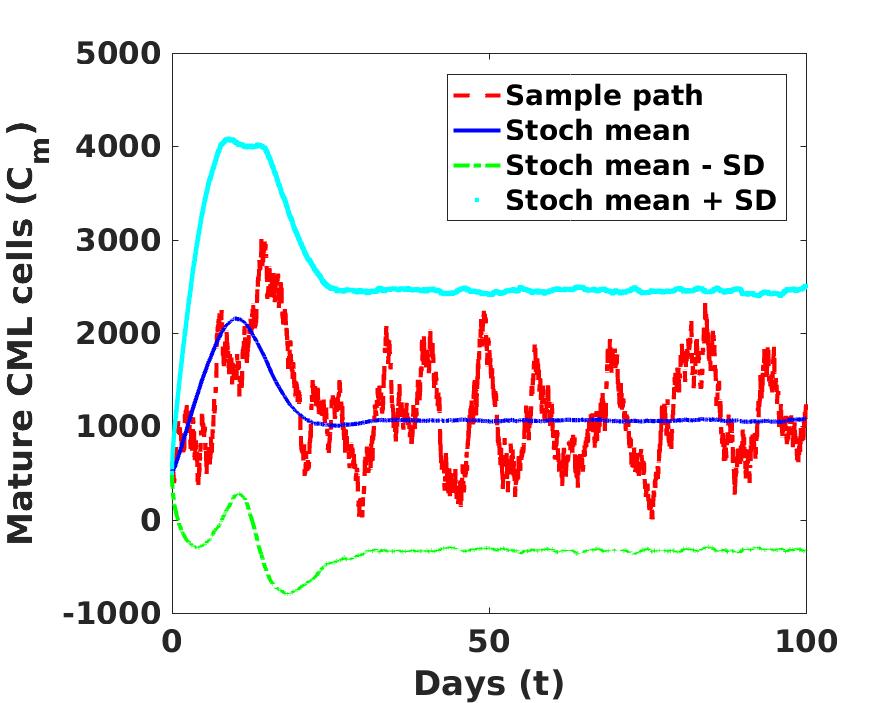}
\caption{}
\label{CML_SDE_B}
\end{subfigure}
\quad
\begin{subfigure}[b]{0.25\textwidth}
\includegraphics[width=\textwidth,height=4cm]{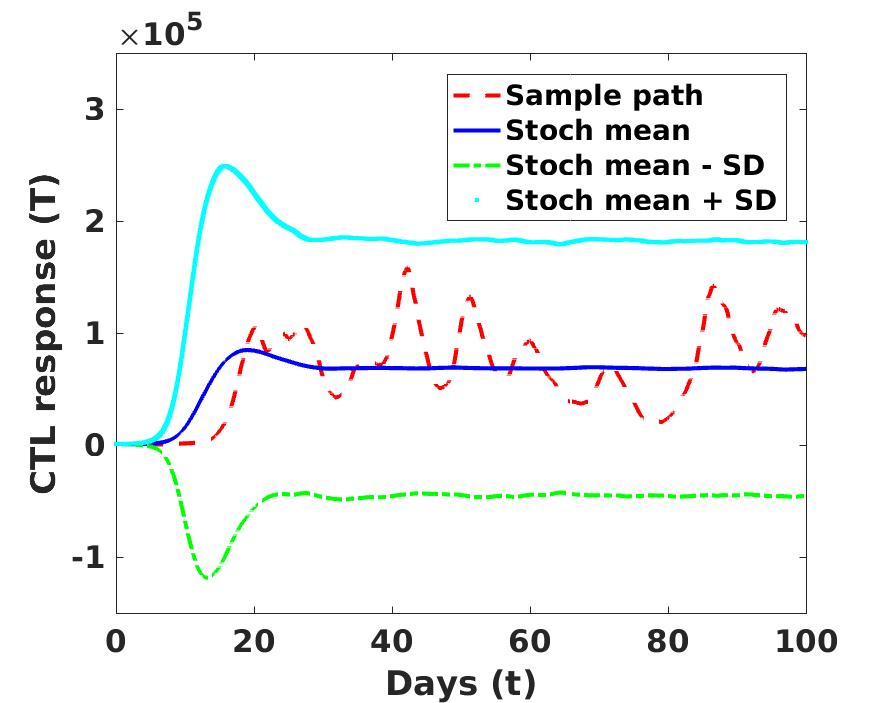}
\caption{}
\label{CML_SDE_C}
\end{subfigure}
\caption{Stochastic dynamics of $C_{s}$, $C_{m}$ and $T$.}
\label{CML_SDE_Figure}
\end{figure}
\vspace{1cm}

\begin{figure}[!ht]
\centering
\begin{subfigure}[b]{0.25\textwidth}
\includegraphics[width=\textwidth,height=4cm]{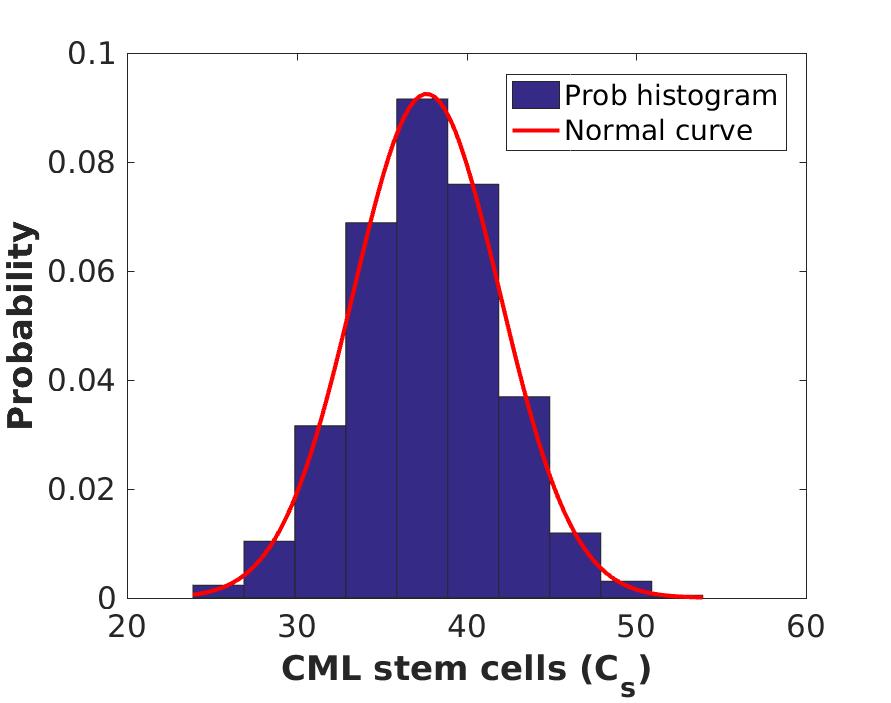}
\caption{}
\label{CML_Hist_A}
\end{subfigure}
\quad
\begin{subfigure}[b]{0.25\textwidth}
\includegraphics[width=\textwidth,height=4cm]{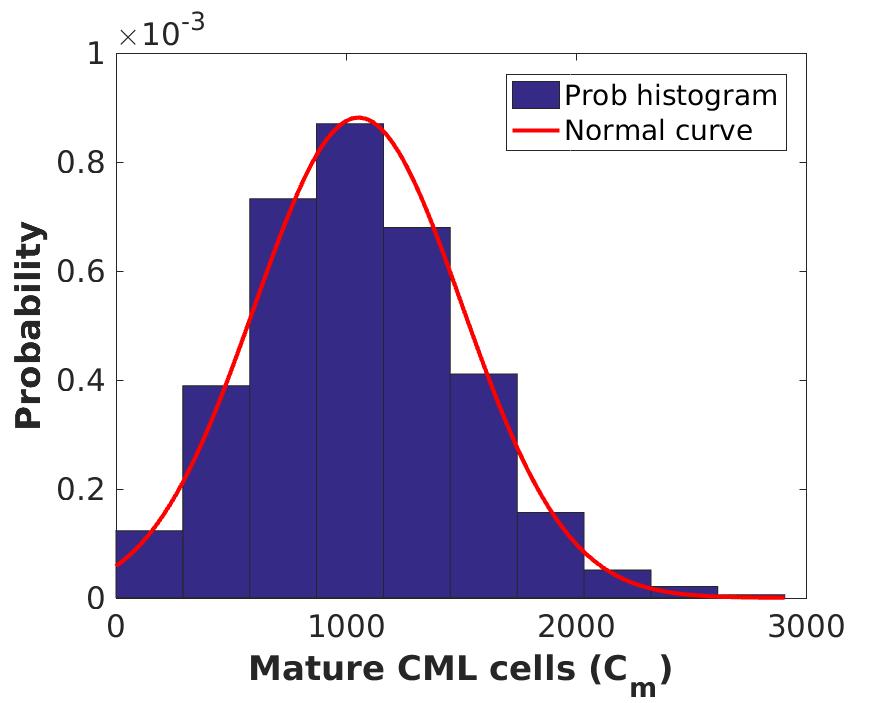}
\caption{}
\label{CML_Hist_B}
\end{subfigure}
\quad
\begin{subfigure}[b]{0.25\textwidth}
\includegraphics[width=\textwidth,height=4cm]{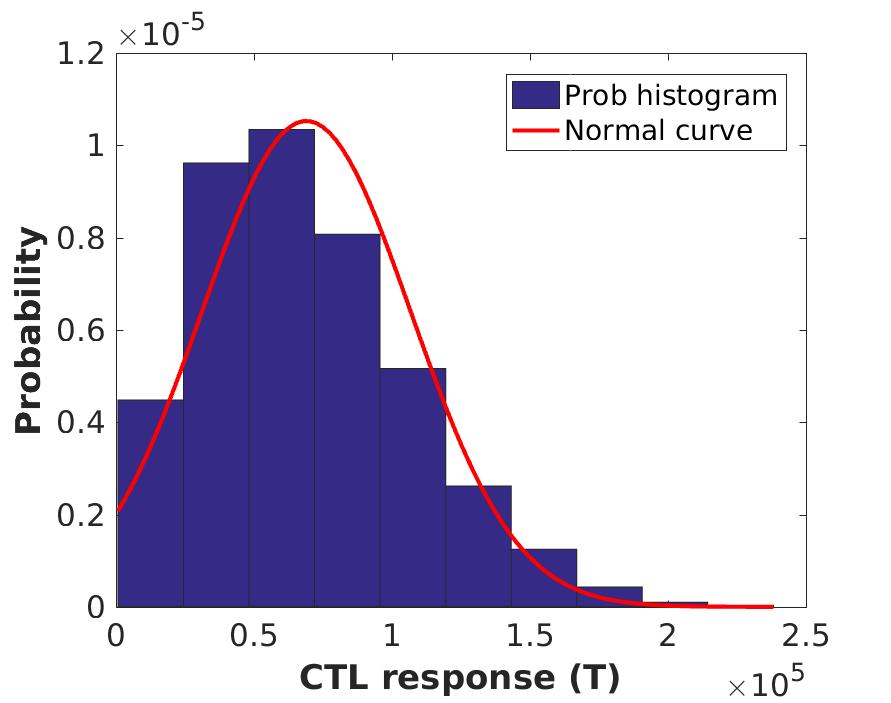}
\caption{}
\label{CML_Hist_C}
\end{subfigure}
\caption{Histograms of $C_{s}$, $C_{m}$ and $T$.}
\label{CML_Histogram}
\end{figure}
\vspace{1cm}

\begin{figure}[!ht]
\centering
\begin{subfigure}[b]{0.25\textwidth}
\includegraphics[width=\textwidth,height=4cm]{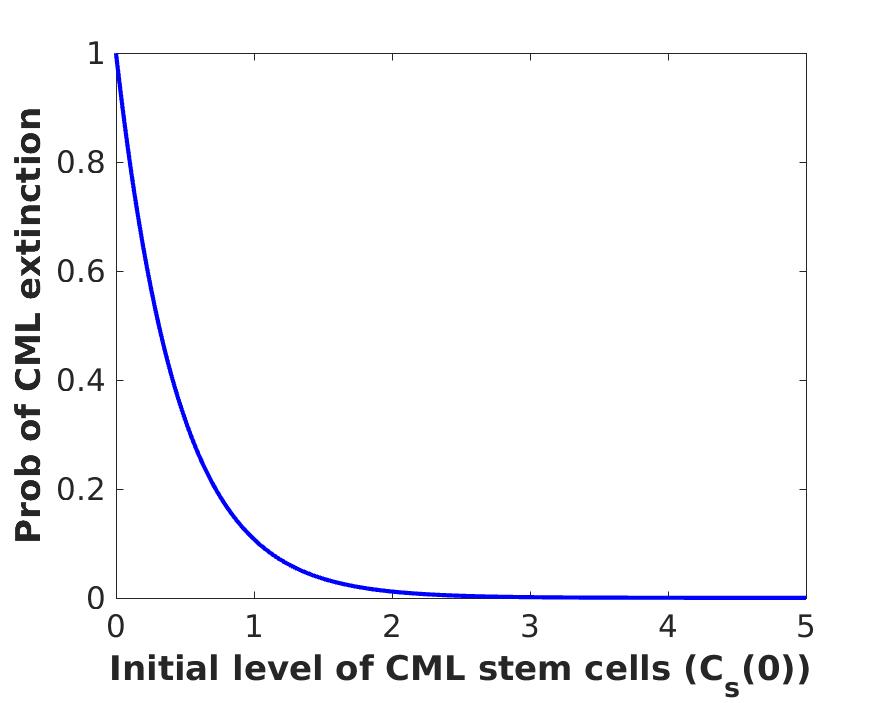}
\caption{}
\label{CML_Extinction_Cs}
\end{subfigure}
\quad
\begin{subfigure}[b]{0.25\textwidth}
\includegraphics[width=\textwidth,height=4cm]{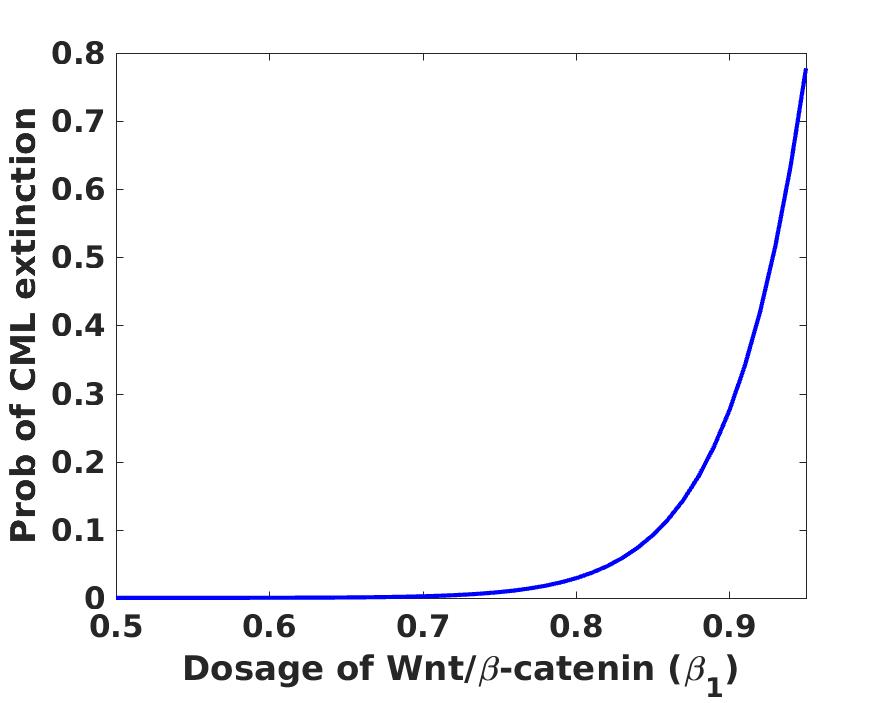}
\caption{}
\label{CML_Extinction_Beta1}
\end{subfigure}
\quad
\begin{subfigure}[b]{0.25\textwidth}
\includegraphics[width=\textwidth,height=4cm]{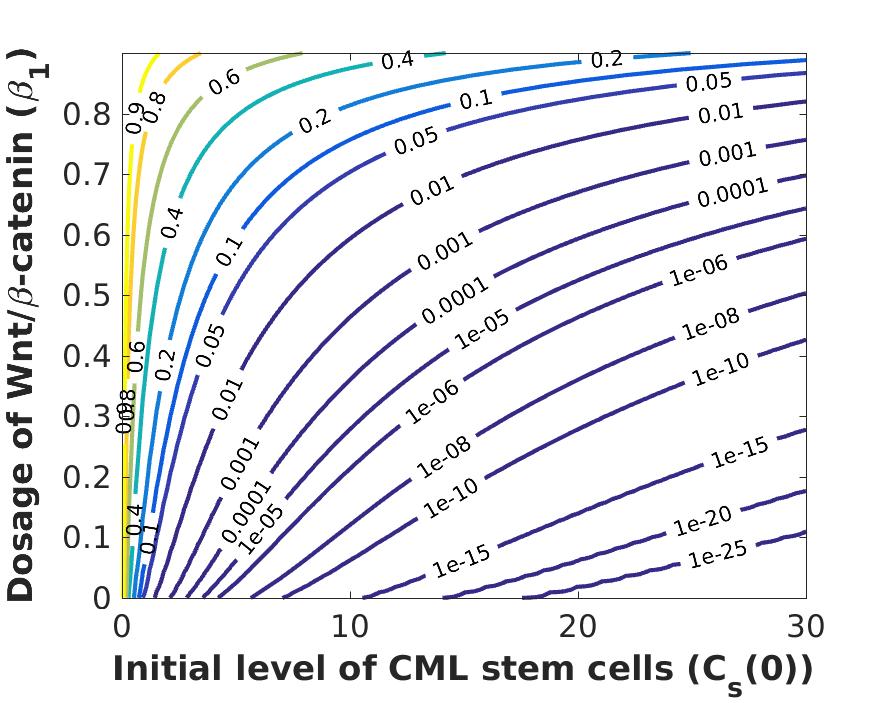}
\caption{}
\label{CML_Extinction_Cs_Beta1}
\end{subfigure}
\caption{The probability of CML extinction depending upon $C_s(0)$, $\beta_1$ and both.}
\label{CML_Extinction}
\end{figure}

\enlargethispage{1.4cm}

\end{document}